\newtheorem{theorem}{Theorem}
\newtheorem{lemma}[theorem]{Lemma}
\newtheorem{prop}[theorem]{Proposition}
\newtheorem{maintheorem}{Main Theorem}
\newcommand{\dif}{\mathrm{d}}
\theoremstyle{definition}
\theoremstyle{remark}
\newcommand{\norm}[1]{\left\Vert#1\right\Vert}
\newcommand{\R}{\mathbb R}
\newcommand{\nablash}{\nabla{\kern -.75 em
     \raise 1.5 true pt\hbox{{\bf/}}}\kern +.1 em}
\newcommand{\Deltash}{\Delta{\kern -.69 em
     \raise .2 true pt\hbox{{\bf/}}}\kern +.1 em}
\newcommand{\Rslash}{R{\kern -.60 em
     \raise 1.5 true pt\hbox{{\bf/}}}\kern +.1 em}
\newcommand{\gammab}{\bar\gamma}
\newcommand{\chih}{\hat{\chi}}
\newcommand{\Sphere}{\mathbb S}
\newcommand{\D}{\partial}
\title[Black hole initial data]{Black hole initial data with a horizon of prescribed intrinsic and extrinsic geometry}
\author{Brian Smith}
\address{
  Freie Universit\"at Berlin, Arnimallee 3,
  14195 Berlin, Germany}
\email{bsmith@math.fu-berlin.de}
\subjclass[2010]{53C21, 53C44, 35K55, 35K57, 35K59, 83C57, 83C05}
\keywords{scalar curvature, parabolic equations, mean curvature,
          reaction-diffusion equations, black holes, constraint equations}
\thanks{This research was funded by the Deutsche 
Forschungsgemeinschaft project SFB 647 B-4 (Space - Time - Matter. Analytic and Geometric Structures)}
\begin{document}
\begin{abstract}
The purpose of this work is to construct asymptotically flat,
time-symmetric initial data with an apparent horizon of   prescribed intrinsic and extrinsic
geometry.  To do this, we use the parabolic partial differential
equation for prescribing scalar curvature. In this equation the
horizon geometry is contained within the freely specifiable part
of the metric.  This contrasts with the conformal method in which
the geometry of the horizon can only be specified up to a
conformal factor.
\end{abstract}

\maketitle

Asymptotically flat, time-symmetric initial data in general
relativity consists of an asymptotically flat Riemannian manifold
$(M,g)$ with non-negative scalar curvature $R=R(g)\geq 0$.  The
interpretation of the scalar curvature is that $R=16\pi\rho$,
where $\rho$ is the local mass density.  Within the initial data
set, a minimal surface that is outermost with respect
to one of the ends of the manifold represents a black hole since
it bounds the largest possible set that, according to cosmic
censorship, would have to be contained within the black hole
region of the Cauchy development of that end. Such a surface is
generally referred to as an \textit{apparent horizon}, and this
can only consist of a union of topological spheres~\cite{H}.  In
the present work, we shall use this term to indicate one such
component.

In the time-symmetric case, \textit{black hole initial data} shall
be here taken  to consist of one end of such a manifold external
to the region bounded by the apparent horizons.  Henceforth, we use the
notation $(M,g)$ to indicate this, and so, with the assumption that
there is only one horizon, $(M,g)$ is now taken to
mean a manifold with apparent horizon boundary $S=\D M\approx
\Sphere^2$. This is sufficient for the Cauchy problem since the
region interior to the apparent horizons does not affect the
Cauchy development exterior to the black hole region of this end.  
For more on black hole initial data see the references in~\cite{S}.

Relevant geometric information concerning an apparent horizon
consists of its metric $h_0$, trace free second fundamental form
$\chih_0$, and the local mass density restricted to the horizon
$\rho_0=\rho|_{S}$.  In the present work we shall say that such a
horizon is \textit{described} by $(\Sphere^2,h_0,\chih_0,\rho_0)$.
Since an apparent horizon must be an outermost minimal surface, it
must be  area minimizing in that any normal variation of
the surface must increase area. By the second variation of area
formula\footnote{This is actually from the standard second variation formula 
combined with the Gauss equation.  
For more on the second variation formula see~\cite{ch.} pp. 169-171.}, 
this implies that for any non-zero
 $C^{\infty}$ test function $\varphi$, one
must have
\[
    \int_{\Sphere^2}\left( |\nabla\varphi|_{h_0}^2+\left(\kappa(h_0)-\frac{1}{2}|\chih_0|^2-8\pi\rho_0\right)\varphi^2\right)\ \dif A_{h_0}\geq 0,
\]
where $\kappa(h_0)$ is the Gauss curvature of $h_0$. We shall be
concerned with the slightly stronger situation in which
\[
        \int_{\Sphere^2} \left(|\nabla\varphi|_{h_0}^2+\left(\kappa(h_0)-\frac{1}{2}|\chih_0|^2-8\pi\rho_0\right)\varphi^2\right)\, \dif A_{h_0}>0
\]
for any non-zero $C^{\infty}$ test function $\varphi$,
in which case we shall say that the horizon is \textit{strictly
stable}.  Thus, in a sense strict stability is almost necessary.
In this work we prove that this condition is  sufficient for the
existence of asymptotically flat, time-symmetric initial data
 containing an apparent horizon described by
$(\Sphere^2,h_0,\chih_0,\rho_0)$.  That is, there holds
\begin{maintheorem}
Let $h_0$ be a Riemannian metric on $\Sphere^2$.  Let $\chih_0$ be
a symmetric tensorfield of rank $(0,2)$, which is trace free with
respect to $h_0$.  Let $\rho_0$ be a nonnegative function.  Then
there exists asymptotically flat, time-symmetric initial data with
a strictly stable horizon boundary described by
$(\Sphere^2,h_0,\chih_0,\rho_0)$ provided  the  operator
\[
    \Delta_{h_0}-\left(\kappa(h_0)-\frac{1}{2}|\chih_0|^2-8\pi\rho_0\right)
\]
is negative.
\end{maintheorem}

As in ~\cite{S}, the main theorem is proved by taking
$M=[r_0,\infty)\times\Sphere^2$, along with an appropriate
local mass density $\rho$, and family of metrics $h(r)$ on
$\Sphere^2$, and solving for a function $u$ such that
\[
        g=u^2dr^2+h
\]
satisfies $R(g)=16\pi\rho$.  This is accomplished using the
parabolic scalar curvature equation\footnote{For more on this equation and its use see
~\cite{bartnik91},\cite{bartnik93},\cite{bartnik03}, \cite{Sh},\cite{ST},
\cite{SW}, \cite{SW2},\cite{WY}. }, which we use here in the
following form:
\[
       \tilde H\D_ru=u^2\Delta_hu-\left(\kappa(h)-8\pi\rho\right)u^3+\left(\D_r\tilde H+\frac{\tilde
       H^2+|\tilde\chi|^2}{2}\right)u,
\]
where $\tilde\chi_{AB}=\frac{1}{2}\frac{\D h_{AB}}{\D r}$ and
$\tilde H=h^{AB}\tilde\chi_{AB}$.  Concerning the foliation spheres
$S_r=\{r\}\times\Sphere^2$, the extrinsic curvature $\chi$ and mean
curvature $H$ are related to $\tilde\chi$ and $\tilde H$ by
$\tilde\chi=u\chi$ and $\tilde H=uH$.

The strength of using the parabolic scalar curvature equation 
 for our purposes is that it allows
for the control of the geometry  of the foliation $S_r$.
Indeed, since the mean curvature of this
foliation is $H=\tilde H/u$,  by simply choosing the
family $h$ such that $\tilde H>0$ for $r>r_0$, a minimal surface at $S_{r_0}$
must be outermost, and hence an apparent horizon. In addition, we
can control the intrinsic geometry of $S_{r_0}$ simply by the
choice of $h(r_0)=h_0$.  It should be noted that these features of
the general method were also of central importance in~\cite{S}.

The difference between the present work and the previous work lies
in how we arrange that $S_{r_0}$ is  minimal.  In~\cite{S} this
was arranged by requiring that $u\to\infty$ as $r\to r_0$.  Then
$S_{r_0}$ is, in fact, totally geodesic since the extrinsic
curvature of the foliation is given by $\chi=\tilde\chi/u$.  We cannot
use this here since we would now like $\chi$ to be non-trivial at
$r_0$, and, in fact, prescribed.  The only alternative is to
arrange that $\tilde H$ vanishes initially.  Inserting this
information into the parabolic scalar curvature equation, we see
that the initial data $u_0$ is required to satisfy the elliptic
equation
\begin{equation}\label{eq:elliptic}
        \Delta_{h_0}u_0-\left(\kappa(h_0)-\frac{1}{2}|\chih_0|^2-8\pi\rho_0\right)u_0+\frac{1}{u_0}=0,
\end{equation}
if, in addition, we assume $\tilde H=r-r_0$ on a neighborhood of
$S_{r_0}$.

Thus, the outline of the method in the present work is as follows:
We first solve Equation~\eqref{eq:elliptic} for the initial data
$u_0$.  We then choose the family $h$ to satisfy
\begin{align}
         \frac{\D h_{ij}}{\D r}h^{ij}&>0,\label{eq:h1}\\
         \frac{\D h_{ij}}{\D r}|_{r_0}&=2(\chih_0)_{ij}u_0,\\
         \frac{\D h_{ij}}{\D r}h^{ij}&=2(r-r_0),\,\, r\in
         [r_0,r_0+\varepsilon),\label{eq:h3}
\end{align}
and in addition, for $r\geq T$, $T$ large, we require
$h=r^2\gammab$, where $\gammab$ is the standard round metric on
$\Sphere^2$.  We also extend $\rho_0$ as a smooth, non-negative,
compactly supported  function $\rho$ on $M$ in such a way that we
may solve the parabolic scalar curvature equation.  We then solve
the parabolic scalar curvature equation with the initial data
$u_0$ and check that the resulting function $u$ also has
appropriate asymptotic behavior.

Solving the parabolic scalar curvature equation near the horizon
cannot, however, be achieved by a simple application of standard
parabolic theory. Indeed, the equation will not be parabolic at
$r=r_0$, and in addition, we would like for $u$ to be $C^{\infty}$
up to and including the initial surface at $r=r_0$. We are able to
deal with this in the following way: Defining $A=\left(\D_r\tilde
H+\frac{\tilde H^2+|\tilde\chi|^2}{2}\right)$ and
$L=\Delta_h-(\kappa-8\pi\rho)$, the parabolic scalar curvature
equation takes the form
\[
      t\D_tu=u^2L u+A u,
\]
where we have defined $t=(r-r_0)$.  Let $\tilde L$ and $\tilde A$
be the $n$-th order Taylor polynomials of $L$ and $A$,
respectively. We solve an $n$-th order approximation of the
parabolic scalar curvature equation
\begin{equation}\label{eq:parapprox}
    t\D_t\tilde u=\tilde u^2\tilde L\tilde u +\tilde A \tilde u-P,
\end{equation}
where $P=a_0t^{n+1}+a_1t^{n+2}+\cdots+ a_{3n-1}t^{4n}$, $a_0,a_1,\ldots,a_{3n-1}\in C^{\infty}(\Sphere^2)$.
The polynomial $P$ is chosen so that Equation~\eqref{eq:parapprox} admits a solution of the form
\[
    \tilde u=u_0+u_1t+\cdots+u_nt^n,
\]
with $u_0,u_1,\dots,u_n$ time independent.  This is possible
since, as will be seen in Section~\ref{sec:decomp}, inserting this
into~\eqref{eq:parapprox} yields a sequence
of elliptic equations for the $u_i$; the first of these is just ~\eqref{eq:elliptic}.
After having solved the sequence of elliptic equations,
$P$ is chosen to annihilate the remaining terms in Equation~\eqref{eq:parapprox}.
We then
insert the decomposition $u=\tilde u+v t^n$ into the parabolic
scalar curvature equation. We are able to solve the resulting
equation to obtain a $C^{\infty}$ solution $v$ which satisfies
$(t\D_t)^mv\in O(t)$ for $m=0,1,\ldots,n$.  The resulting solution
$u$ of the parabolic scalar curvature equation has all of the
desired properties.

The outline of the paper is as follows: In
Section~\ref{sec:elliptic} we solve Equation~\eqref{eq:elliptic}
for the function $u_0$. In Section~\ref{sec:decomp} we precisely
define the family $h$ on a small annular region $A_{\varepsilon}=B_{r_0+\varepsilon}(0)\backslash B_{r_0}(0)$
about the horizon and, as described in the previous paragraph,
we decompose the parabolic scalar curvature equation into
Equation~\eqref{eq:parapprox} and the part for the remainder $v$.
In Section~\ref{sec:sequence} we decompose \eqref{eq:parapprox}
into the sequence of elliptic equations for the $u_n$, which are
then solved. In Section~\ref{sec:remain} we solve the equation for
the remainder $v$.  In the next to the last section we establish a new
result for global existence of solutions of the parabolic scalar curvature equation.
In the final section the proof of the main theorem is completed by showing how to extend the definition of $h$
smoothly to $\R^3\backslash B_{r_0+\varepsilon}(0)$
such that the parabolic scalar curvature equation has a unique global solution such that the constructed metric
\[
  g=u^2 dr^2+h
\]
is asymptotically flat.

I would like to thank Dr. Kashif Rasul for useful discussions while preparing the manuscript and 
I would like to thank Professor Robert Bartnik for suggesting  prescribing 
the horizon extrinsic geometry as an interesting problem.

\section{The elliptic equation for the initial data}\label{sec:elliptic}
This section\footnote{This section, as well as Section~\ref{sec:sequence}, 
uses standard methods in elliptic P.D.E theory.  For more on these see~\cite{GT}.} 
deals with the  elliptic equation of the form
\begin{equation}\label{eq:vellipt}
    \Delta u-au+\frac{1}{u}=0
\end{equation}
on $(\Sphere^2,h_0)$, with $a\in C^{\infty}(\Sphere^2)$ and
$\Delta=\Delta_{h_0}$.
The main result of this section is that the
equation is solvable provided that the operator $\Delta-a$ is
strictly negative. This is a generalization of the following
theorem that was proved in~\cite{S}:
\begin{theorem}  \label{thm:vellipt}
Assume $a>0$ and let $a_*=\inf_{\Sphere^2}a,
a^*=\sup_{\Sphere^2}a$. Then Equation~\eqref{eq:vellipt} has a
positive solution $u\in C^{\infty}$ satisfying $1/\sqrt{a^*}\leq
u\leq 1/\sqrt{a_*}$. Within the class of $C^2$ functions there are
exactly two solutions $\pm u$.
\end{theorem}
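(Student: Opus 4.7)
The plan is to run the classical monotone iteration between a constant sub- and supersolution on the closed surface $\Sphere^2$. Because $a_* \le a \le a^*$, the two constants
\[
    u_- = \frac{1}{\sqrt{a^*}}, \qquad u_+ = \frac{1}{\sqrt{a_*}},
\]
are a sub- and supersolution of~\eqref{eq:vellipt}, respectively: indeed $\Delta u_\pm = 0$, so $-a u_- + 1/u_- = (a^* - a)/\sqrt{a^*} \ge 0$ and $-a u_+ + 1/u_+ = (a_* - a)/\sqrt{a_*} \le 0$. They are ordered, $0 < u_- \le u_+$, and the interval $[u_-, u_+]$ stays bounded away from zero.

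Fix $K > 2 a^*$. With this choice the function $s \mapsto K s - a s + 1/s$ is nondecreasing on $[u_-, u_+]$, since its derivative $K - a - 1/s^2$ is nonnegative there. I would then solve the linear elliptic problem
\[
    (K - \Delta) u_{n+1} = K u_n - a u_n + \frac{1}{u_n}
\]
on $\Sphere^2$ inductively, once starting from $u_-$ and once from $u_+$. Each step is uniquely solvable because $K - \Delta$ is coercive on the closed manifold, and its inverse is positivity preserving by the maximum principle. The monotonicity of the right-hand side in $u_n$ then yields by induction that one iteration produces a nondecreasing sequence and the other a nonincreasing sequence, both trapped in $[u_-, u_+]$. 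Monotone pointwise convergence gives weak solutions of~\eqref{eq:vellipt} in $[u_-, u_+]$, and standard elliptic bootstrap upgrades them to $C^\infty$ since the nonlinearity $au - 1/u$ is smooth on this strictly positive interval.

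The a~priori bounds for an arbitrary positive $C^2$ solution $u$ are immediate from the maximum principle: at a maximum $\Delta u \le 0$ forces $au^2 \le 1$, hence $u^2 \le 1/a_*$; at a minimum $\Delta u \ge 0$ forces $au^2 \ge 1$, hence $u^2 \ge 1/a^*$. For the ``exactly two solutions $\pm u$'' part, any $C^2$ solution must be nowhere zero for $1/u$ to make sense, so by connectedness of $\Sphere^2$ it is either everywhere positive or everywhere negative. Since~\eqref{eq:vellipt} is odd in $u$, it suffices to prove uniqueness within positive solutions. Given two such $u_1, u_2$, the difference $w = u_1 - u_2$ satisfies
\[
    \Delta w = \Bigl(a + \frac{1}{u_1 u_2}\Bigr) w,
\]
and since the coefficient on the right is strictly positive, evaluating at a positive maximum and at a negative minimum of $w$ both yield contradictions unless $w \equiv 0$.

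The one step that requires care rather than being boilerplate is the choice of $K$ together with the verification that the iteration preserves the interval $[u_-, u_+]$, since this is what keeps the limit uniformly bounded away from zero and makes the bootstrap to $C^\infty$ legitimate. Beyond that the argument is a direct application of standard sub- and supersolution theory on a closed surface.
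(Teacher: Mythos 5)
Your proof is correct. Note that the paper itself does not prove Theorem~\ref{thm:vellipt} here: it is imported without proof from the earlier work~\cite{S}, and the present paper only proves the generalization to negative $\Delta-a$ by a continuity method starting from this result. Your monotone iteration between the constant barriers $1/\sqrt{a^*}$ and $1/\sqrt{a_*}$ is a complete, self-contained argument: the sub/supersolution verification is right, the choice $K>2a^*$ does make $s\mapsto Ks-as+1/s$ nondecreasing on the trapping interval (since $1/s^2\leq a^*$ and $a\leq a^*$ there), and the positivity-preserving inverse of $K-\Delta$ on the closed surface gives the ordered monotone sequences; the only point to articulate fully is that passing to the limit in the equation uses the uniform $L^\infty$ bound on the right-hand side together with $W^{2,p}$ and Schauder estimates, which you invoke as the bootstrap. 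Your maximum-principle derivation of $1/\sqrt{a^*}\leq u\leq 1/\sqrt{a_*}$ and your uniqueness argument via $\Delta w=\left(a+\frac{1}{u_1u_2}\right)w$ with strictly positive coefficient are in fact precisely the arguments the paper itself uses later for the deformed equation~\eqref{eq:deform} (the lower bound $u_{\tau}\geq 1/\sqrt{a_{\tau}^*}$ and the uniqueness step at the end of the continuity proof), so your write-up is consistent with, and fills in, what the paper leaves to the reference.
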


We shall use the method of continuity to generalize this to the
case that $\Delta-a$ is merely assumed negative. Namely, we take a
positive function $a_0$ and define $a_{\tau}=a_0(1-\tau)+\tau a$
for $\tau\in [0,1]$ and consider the family of equations
\begin{equation} \label{eq:deform}
     \Delta u_{\tau}-a_{\tau}u_{\tau}+\frac{1}{u_{\tau}}=0.
\end{equation}
By the previous theorem, we know that a solution exists for
$\tau=0$. We must only prove that solutions continue to exist with
appropriate bounds for all $\tau\in [0,1]$.

Before formally stating and proving the resulting theorem, we note
that $a^*_{\tau}$ remains positive,  we retain the lower bound $u_{\tau}\geq 1/\sqrt{a_{\tau}^{*}}$, and
$\Delta-a_{\tau}$ remains negative.  The latter is easily enough
checked since the energy

\[
  E_{\tau}(\varphi)=\int_{\Sphere^2}\left(|\nabla\varphi|^2+a_{\tau}\varphi^2\right)\, \dif A_{h_0}
\]
satisfies
\[
     E_{\tau}(\varphi)=(1-\tau)E_0(\varphi)+\tau E_1(\varphi)\geq\delta=\min\{\delta_0,\delta_1\}>0,
\]
for all $\varphi\in H^1$  with $\norm{\varphi}_{L^2}\equiv 1$, where 
$\delta_i=\inf\{E_i(\varphi): \norm{\varphi}_{L^2}\equiv 1\},\, i=0,1$; note that each of $\delta_i$ 
has a positive lower bound by the standard argument involving Rellich compactness.  
By applying $E_{\tau}$ to $\varphi\equiv 1$
the fact that $a^*_{\tau}>0$ is easily verified.
The proof of $u_{\tau}\geq 1/\sqrt{a^{*}_{\tau}}$ is a
simple application of the maximum principle.  Indeed, any solution
$u_{\tau}\in C^2(\Sphere^2)$ of~\eqref{eq:deform} cannot change signs,
and so we may, without loss of generality, assume that $u_{\tau}>0$.  At
a point $p$ at which the infimum is attained one has $\Delta
u_{\tau}\geq 0$; whence at $p$ there holds $a_{\tau}u_{\tau}\geq 1/u_{\tau}$.  Thus,
$a_{\tau}(p)>0$ and $\inf u_{\tau}=u_{\tau}(p)\geq 1/\sqrt{a_{\tau}(p)}\geq
1/\sqrt{a_{\tau}^*}$.

We are now in a position to prove

\begin{theorem}
Suppose that the operator $\Delta-a$ is negative on $H^1$.
Then the equation
\begin{equation}
      \Delta u-a u+\frac{1}{u}=0
\end{equation}
has a  positive $C^{\infty}$ solution that is unique within the class of
positive $C^{2}$ functions.
\end{theorem}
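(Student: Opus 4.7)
My plan is to carry out the method of continuity along the family $a_\tau = (1-\tau)a_0 + \tau a$ that the author has already introduced. Specifically, I would let $S = \{\tau \in [0,1] \st \eqref{eq:deform} \text{ has a positive } C^{\infty} \text{ solution}\}$ and show that $S$ is nonempty, open, and closed in $[0,1]$. Nonemptiness is immediate from Theorem~\ref{thm:vellipt}: choose any smooth $a_0 > 0$ (for instance, a large positive constant), which produces a positive smooth solution at $\tau = 0$.

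For openness at a $\tau_* \in S$ with solution $u_{\tau_*}$, I would apply the implicit function theorem to $F(u,\tau) = \Delta u - a_\tau u + 1/u$ between, say, $C^{2,\alpha}(\Sphere^2) \times [0,1]$ and $C^{0,\alpha}(\Sphere^2)$. The linearization at $(u_{\tau_*},\tau_*)$ is $w \mapsto \Delta w - \bigl(a_{\tau_*} + u_{\tau_*}^{-2}\bigr)w$. Since $u_{\tau_*}^{-2} > 0$ and $\Delta - a_{\tau_*}$ is already negative on $H^1$ by the energy interpolation in the excerpt, the linearized operator is strictly negative, hence an isomorphism between the chosen Hölder spaces. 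Smoothness of the resulting solution follows from elliptic regularity applied iteratively to $\Delta u_\tau = a_\tau u_\tau - 1/u_\tau$.

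The main work, as usual, is closedness, and this is where I expect the principal obstacle. The lower bound $u_\tau \geq 1/\sqrt{a_\tau^*}$ is already furnished by the maximum principle argument given in the excerpt, so what is needed is a uniform upper bound. My strategy is to multiply~\eqref{eq:deform} by $u_\tau$ and integrate, which after one integration by parts gives
\[
   \int_{\Sphere^2}\!\bigl(|\nabla u_\tau|^2 + a_\tau u_\tau^2\bigr)\, \dif A_{h_0} \;=\; \Vol(\Sphere^2,h_0).
\]
Combining this with the uniform coercivity estimate $E_\tau(\varphi) \geq \delta \norm{\varphi}_{L^2}^2$ (which follows from the $\delta$-bound recorded above) yields a uniform $H^1$ bound on $u_\tau$. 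Bootstrapping through the equation $\Delta u_\tau = a_\tau u_\tau - 1/u_\tau$, using that $1/u_\tau$ is bounded above by $\sqrt{a_\tau^*}$, gives uniform $W^{2,p}$ bounds for every $p < \infty$, hence $C^{1,\alpha}$ bounds, and then Schauder estimates yield uniform $C^{k,\alpha}$ bounds for all $k$. Any sequence $\tau_i \to \tau_\infty$ in $S$ therefore produces, via Arzelà--Ascoli, a subsequential limit $u_{\tau_\infty}$, which is a positive $C^\infty$ solution (positivity is preserved by the strict lower bound).

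Finally, for uniqueness within positive $C^2$ functions, if $u$ and $v$ are two such solutions then $w = u - v$ satisfies the linear equation $\Delta w - \bigl(a + (uv)^{-1}\bigr) w = 0$. Since $(uv)^{-1} > 0$ and $\Delta - a$ is negative on $H^1$, the operator $\Delta - (a + (uv)^{-1})$ is strictly negative and therefore has trivial kernel, forcing $w \equiv 0$. This completes the proposed proof.
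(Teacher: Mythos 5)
Your proposal is correct and follows essentially the same route as the paper: the method of continuity along $a_\tau$, openness via the implicit function theorem with the linearization $\Delta-(a_\tau+u_\tau^{-2})$, a uniform $L^2$ (hence $H^1$) bound from the energy identity $E_\tau(u_\tau)=\Vol(\Sphere^2,h_0)$ combined with the coercivity constant $\delta$, and the identical uniqueness argument via the equation for the difference. The only differences are cosmetic: you obtain the $L^2$ bound directly rather than by the paper's normalization-and-contradiction argument, and you pass to the limit with full elliptic bootstrapping and Arzel\`a--Ascoli where the paper uses weak $H^1$ compactness followed by regularity.
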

\begin{proof}
Let $u_0$ denote the solution of Equation~\eqref{eq:deform} at
$\tau=0$.  The linearization of the associated elliptic operator
at $\tau=0$ is just $\Delta-(a_0+u_0^{-2})$, which is bijective as an
operator from $C^{2,\alpha}$ into $C^{\alpha}$, for instance.
Hence, by the implicit function theorem, classical solutions
continue to exist on some interval $[0,\tau_{(1)})$.  Assume that
$\tau_{(1)}$ is maximal in this regard, but $\tau_{(1)}<1$.

In fact, it follows from the negativity of $L_{\tau}=\Delta-a_{\tau}$ that
the solutions $u_{\tau}$ are bounded in $L^2$ on this
interval. To see this, suppose otherwise, and let $\tau_n$ be a sequence,
$\tau_n\to \tau_{(1)}$ as $n\to\infty$, such that
$\lim_{n\to\infty}\norm{u_{\tau_n}}_{L^2}=\infty$.  Defining
$\tilde u_{\tau_n}=u_{\tau_n}/\norm{u_{\tau_n}}_{L^2}$, this new
sequence of functions satisfies $\norm{\tilde u_{\tau_n}}_{L^2}\equiv 1$.
But with $\delta$ as in the paragraph before the theorem,  using~\eqref{eq:deform}
we can choose $n$ large enough that
\[
    E_{\tau_n}(\tilde u_{\tau_n})=
    -\langle L_{\tau_n}\tilde u_{\tau_n},\tilde u_{\tau_n}\rangle|_{L^2}
    =\frac{1}{\norm{u_{\tau_n}}_{L^2}^{2}}\int_{\Sphere^2} \dif A_{h_0}<\delta,
\]
which is a contradiction.

Using the equation again together with the $L^2$ bound on $u_{\tau}$, we obtain a bound
\[
     \int_{\Sphere^2}|\nabla u_{\tau}|^2\, \dif A_{h_0} = \int_{\Sphere^2} a_{\tau}u^2_{\tau}\, \dif A_{h_0} - \int_{\Sphere^2}  \dif A_{h_0} \leq C<\infty,
\]
and so  $u_{\tau}$ is uniformly bounded in $H^1$.  Thus,
we may use Rellich's compactness theorem to obtain a sequence 
$u_{\tau_n}$ that converges weakly in $H^1$ and strongly in $L^1$ to an $H^1$ function $u_1$.
Now the $u_{\tau_n}$ are essentially bounded from below by a positive constant.  This is preserved in the limit,
and we have $u^{-1}_{\tau_n}\to u_1^{-1}$ in $L^1$ as $n\to\infty$ as well.  Hence
\begin{align*}
   0&= \lim_{n\to\infty}\int_{\Sphere^2} \left(-\nabla u_{\tau_n}\cdot\nabla\varphi-a_{\tau_n}u_{\tau_n}\varphi +\frac{\varphi}{u_{\tau_n}}\right)\, \dif A_{h_0}\\
        &=\int_{\Sphere^2} \left(-\nabla u_1\cdot\nabla\varphi-a_1u_1\varphi +\frac{\varphi}{u_1}\right)\, \dif A_{h_0},
\end{align*}
for any test function $\varphi\in C^{\infty}(\Sphere^2)$,
and thus $u_1$ is seen to be an $H^1$ weak solution of~\eqref{eq:deform} at $\tau_{(1)}$, 
which is essentially bounded from below by a positive constant.  Standard elliptic regularity theory now shows this to be a $C^{\infty}$ solution.

Applying now the implicit function theorem again as above, we see
that there exists a solution on a slightly larger interval
$[0,\tau_{(1)}+\varepsilon)$, in contradiction to the assumption that
$\tau_{(1)}$ is maximal. Hence $\tau_{(1)}=1$.

To establish uniqueness within the class of positive $C^2$ functions, we consider the difference of two such solutions
$u_1,u_2$.  With $\delta u=u_1-u_2$, one has
\[
      \Delta\delta u-\left(a+\frac{1}{u_1u_2}\right)\delta u= 0.
\]
But if $\delta u\neq 0$ then $\Delta - a$ cannot be a negative operator, in contradiction to the hypothesis of the theorem.
\end{proof}

\section{Decomposition of the parabolic scalar curvature
equation near the horizon}\label{sec:decomp}

In this section we decompose the parabolic scalar curvature
equation on a region near the horizon into the part for the $n$-th
order Taylor polynomial
\[
    \tilde u=u_0+u_1t+\cdot+u_nt^n,
\]
and the part for the remainder, but first we more precisely define
the family $h(t)$.

Using the solution $u_0$ of Equation~\eqref{eq:elliptic}, whose existence is
guaranteed by the results of last section, and recalling that
$t=r-r_0$, we define $\tilde\chi_0=u_0\chih_0$ and take
$h_{ij}(t)$ as the solution of the nonlinear ordinary differential
equation
\begin{equation}\label{eq:hdiff}
    h'_{ij}=t h_{ij}+2 \left((\tilde\chi_0)_{ij}-\frac{1}{2}\text{tr}_h\tilde\chi_0h_{ij}\right).
\end{equation}
By the standard short time theory for ordinary differential
equations, this is possible on some interval $[0,\varepsilon]$,
where in addition we may assume that $h$ is $C^{\infty}$.  Note
that $h$ satisfies Conditions~\eqref{eq:h1}~-~\eqref{eq:h3} of the
introduction. We extend $\rho_0$ to
$[r_0,r_0+\varepsilon]\times\Sphere^2$ by
\[
    \rho(t,p)=\rho_0(p).
\]

We may now use the definitions of $h$ and $\rho$ to compute
$L=\Delta_h-(\kappa(h)-8\pi\rho)$ and $A=\D_r\tilde H+1/2(\tilde
H^2+|\tilde\chi|^2)$ on $[0,\varepsilon]$.  As families of
operators from $C^{k+2,\alpha}$ into $C^{k,\alpha}$, we see that
$L$ and $A$ are $C^{\infty}$ in time.  Whence we have the
expansions
\begin{align*}
      L&=L_0+L_1t+\cdots+L_nt^n+R_Lt^n\\
      A&=A_0+A_1t+\cdots+A_nt^n+R_At^n,
\end{align*}
where  $R_L,R_A$ verify
$(tD_t)^mR_L,(tD_t)^mR_A \in O(t)$ for $n=0,1,\ldots,n$. Note that
$L_0=\Delta_{h_0}-(\kappa(h_0)-8\pi\rho_0)$ and
$A_0=1+|\tilde\chi_0|^2/2=1+|\hat\chi_0|^2u_0^2/2$.

To begin to decompose the parabolic scalar curvature equation,
define $\tilde L=L_0+L_1t+\cdots+L_nt^n$ and $\tilde
A=A_0+A_1t+\cdots+A_nt^n$, and assume that $u$ is a solution of the parabolic scalar curvature equation on
$[0,\varepsilon]$ that can be expressed as
\[
   u=u_0+u_1t+\cdots+u_nt^n+vt^n;
\]
as pointed out in the introduction, $u_0$ necessarily has to be
the solution of Equation~\eqref{eq:elliptic}.  As at the beginning of the section, we take
 $\tilde
u=u_0+u_1t+\cdots+u_nt^n$.  Then
\begin{align*}
      u^2Lu&=u^2L\tilde u+t^nu^2L v=\left(\tilde u^2+2\tilde u
      vt^n+t^{2n}v^2\right)L\tilde u+t^nu^2L v\\
      &=\tilde u^2L\tilde u+t^nu^2L v+\left(2\tilde u L\tilde u
      +t^{n}v L\tilde u\right)t^nv\\
      &=\tilde u^2\tilde L\tilde u+t^nu^2L v+\left(2\tilde u L\tilde u
      +t^{n}vL\tilde u\right)t^nv+\tilde u^2t^nR_L\tilde u,
\end{align*}
and
\[
   A u = \left(\tilde A+t^nR_A\right)(\tilde u+t^n v)=\tilde
   A\tilde u + \tilde A t^n v + t^{2n}R_A v+ t^nR_A\tilde u.
\]
Thus, under the above assumptions, the parabolic scalar curvature
equation admits the decomposition
\begin{align}
      t\frac{\D\tilde u}{\D t}&=\tilde u^2\tilde L\tilde u+\tilde
      A\tilde u-P\label{eq:polynom0}\\
      t\frac{\D v}{\D t}&= u^2L v+\left(A+2\tilde u L\tilde u
      +t^nvL\tilde u-n\right)v+\left(\tilde u^2R_L\tilde u+R_A\tilde
      u+\frac{P}{t^n}\right),\label{eq:remain}
\end{align}
where we have introduced an undetermined function $P$. Our procedure shall be to determine $P$ so that we may solve~\eqref{eq:polynom0} for
$\tilde u$ an $n$-th order polynomial in time, and then prove
the existence of a solution $v$ of~\eqref{eq:remain} satisfying
$t^m\D_t^m\nabla^kv\in O(t),\, m=0,1,\ldots,n,\, k\in \mathbb N$.

\section{A degenerate non-linear parabolic equation with polynomial dependence on
time}\label{sec:sequence}

In this section we study the degenerate parabolic equation
\begin{equation}\label{eq:polynom}
     t\frac{\D u}{\D t}=u^2Lu+Au-P
\end{equation}
in the case that the source terms have polynomial expansions in
time.  More precisely, we assume that
\[
       L=L_0+L_1t+L_2t^2+\cdots+L_nt^n
\]
\[
       A=A_0+A_1t+A_2t^2+\cdots+A_nt^n,
\]
and
\[
    P=a_0t^{n+1}+a_1t^{n+2}+\cdots+a_{3n-1}t^{4n},
\]
where $L_0=\Delta_{h_0}-(\kappa(h_0)-8\pi\rho_0)$, the $L_i$
are second order differential operators on $\Sphere^2$ in general, and  $a_i,A_i\in C^{\infty}(\Sphere^2)$ for all $i$;
$A_0$ is here the same as in the last section.

We have the following theorem:
\begin{theorem}
Assume that $u_0$ is a positive $C^{\infty}$  solution of
\[
    L_0u_0+\frac{A_0}{u_0}=0,
\]
and assume furthermore, that $L_0$ is a negative
operator.
Then the $a_i$ in the polynomial $P$ may be chosen such that there exists a solution $u$ of
Equation~\eqref{eq:polynom} of the form
\[
        u=u_0+u_1t+u_2t^2+\cdots+u_nt^n,
\]
where $u_1,u_2,\dots,u_n$ are $C^{\infty}$ functions on $\Sphere^2$
and $u_0$ is as above.
\end{theorem}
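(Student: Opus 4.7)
The plan is to substitute the ansatz $u = u_0 + u_1 t + \cdots + u_n t^n$ into Equation~\eqref{eq:polynom}, expand both sides as polynomials in $t$, and match coefficients. Since $L$ is a polynomial of degree $n$ in $t$, $u^2$ has degree $2n$, and $Lu$ has degree $2n$, the product $u^2 L u$ has degree $4n$, while $Au$ has degree $2n$ and $t\D_t u$ has degree $n$. This fits with $P = a_0 t^{n+1} + \cdots + a_{3n-1} t^{4n}$. The coefficient of $t^0$ on the right is $u_0^2 L_0 u_0 + A_0 u_0$, and its vanishing is precisely the standing hypothesis $L_0 u_0 + A_0/u_0 = 0$.

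For each $m$ with $1 \le m \le n$, I would isolate the terms containing the unknown $u_m$ in the coefficient of $t^m$ on the right-hand side. In $u^2 L u$ these are the three contributions $u_0^2 L_0 u_m + 2 u_0 (L_0 u_0) u_m$, and in $Au$ the single contribution $A_0 u_m$; using $L_0 u_0 = -A_0/u_0$ they collapse to $u_0^2 L_0 u_m - A_0 u_m$. Equating with the left-hand-side coefficient $m u_m$ produces a linear elliptic equation
\[
  \Bigl(L_0 - \frac{A_0 + m}{u_0^2}\Bigr) u_m = - \frac{F_m}{u_0^2},
\]
where $F_m \in C^{\infty}(\Sphere^2)$ depends only on the previously determined $u_0,\ldots,u_{m-1}$ and on the prescribed data $L_0,\ldots,L_m,A_0,\ldots,A_m$. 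Since $A_0 = 1 + |\tilde\chi_0|^2/2 > 0$ by the computation in Section~\ref{sec:decomp} and $m\geq 1$, the zeroth-order coefficient $-(A_0+m)/u_0^2$ is strictly negative, so the perturbed operator inherits strict negativity from $L_0$. Strict negativity implies trivial kernel, and Fredholm theory then yields an isomorphism $C^{2,\alpha}(\Sphere^2) \to C^{\alpha}(\Sphere^2)$, giving a unique $C^{2,\alpha}$ solution $u_m$; standard elliptic regularity with smooth data upgrades it to $C^\infty$. Proceeding by induction on $m$ determines $u_1,\ldots,u_n \in C^{\infty}(\Sphere^2)$.

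For $n+1 \le m \le 4n$ the left-hand side $t\D_t u$ contributes nothing at order $t^m$ (since $u$ has degree $n$), while $u^2 L u + A u$ contributes a specific $C^{\infty}$ function on $\Sphere^2$ built from the already-determined $u_i$ together with the $L_i$ and $A_i$. Setting $a_{m-n-1}$ equal to this function makes Equation~\eqref{eq:polynom} hold at order $t^m$; the $3n$ coefficients $a_0,\ldots,a_{3n-1}$ exactly account for the orders $t^{n+1},\ldots,t^{4n}$, matching the prescribed form of $P$. The only genuine analytic step is the invertibility of $L_0 - (A_0+m)/u_0^2$ in the inductive construction, which is immediate from the negativity hypothesis together with positivity of $A_0$; everything else is polynomial bookkeeping.
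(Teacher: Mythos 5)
Your proposal is correct and follows essentially the same route as the paper: substitute the polynomial ansatz, match coefficients, observe that the order-$m$ equation is $\bigl(L_0-(A_0+m)/u_0^2\bigr)u_m = -F_m/u_0^2$ with $F_m$ depending only on $u_0,\dots,u_{m-1}$, invert the strictly negative operator inductively, and then define the $a_i$ to absorb the orders $t^{n+1}$ through $t^{4n}$. Your explicit remark that $A_0=1+|\tilde\chi_0|^2/2>0$ is what justifies the preserved negativity, a point the paper uses implicitly.
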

\begin{proof}
We substitute $u=\Sigma_{i=0}^nt^iu_i$,  $L=\Sigma_{i=0}^nt^iL_i$,  $A=\Sigma_{i=0}^nt^iA_i$ into Equation~\eqref{eq:polynom} to obtain
\[
   \Sigma_{m=0}^{n} mu_mt^m=\Sigma_{m=0}^{4n}\Sigma_{a+b+c+d=m} u_au_bL_cu_d
   t^m+\Sigma_{m=0}^{2n}\Sigma_{a+b=m} A_au_b t^m-P.
\]
Equating coefficients for the first $n+1$ terms yields the following coupled sequence of equations:
\[
      mu_m= \Sigma_{a+b+c+d=m} u_au_bL_cu_d+\Sigma_{a+b=m}
      A_au_b.
\]
The first of these is just $u_0^2L_0u_0+A_0u_0=0$, which we have already assumed is uniquely solvable. 
For $m>0$ we separate the terms involving $u_m$ to get
\begin{gather*}
      mu_m= u_0^2L_0u_m+A_0u_m+2u_0u_mL_0u_0+\Sigma_{a+b+c+d=m;\, a,b,d<m} u_au_bL_cu_d\\+\Sigma_{a+b=m,\, b<m}A_au_b.
\end{gather*}
But since $L_0u_0=-A_0/u_0$, these can be rewritten as
\begin{gather} \label{eq:mtheq}
            L_0u_m-\left(\frac{A_0+m}{u_0^2}\right)u_m=\\ \nonumber -\frac{1}{u_0^2}\left(\Sigma_{a+b+c+d=m;\, a,b,d<m} u_au_bL_cu_d+\Sigma_{a+b=m,\, b<m}
      A_au_b\right).
\end{gather}
Note that the right hand side only contains $u_i$ for $i < m$, and so in principle these equations can be solved inductively.

To proceed with the solution, note that when $m=0$ the equation can be solved by hypothesis, and for $m>0$,
the equations are linear in $u_m$.
Furthermore, it was assumed that $L_0$ is negative; whence so is $L_0-\frac{(A_0+m)}{u_0^2}$.  Thus,
assuming that a unique $C^{k,\alpha}$ solution exists for all $i\leq m$ the right hand side of
\begin{align*}
      &L_0u_{m+1}-\left(\frac{A_0+(m+1)}{u_0^2}\right)u_{m+1}\\
      =&-\frac{1}{u_0^2}\left(\Sigma_{a+b+c+d=(m+1);\, a,b,d<(m+1)} u_au_bL_cu_d+\Sigma_{a+b=(m+1),\, b<(m+1)}
      A_au_b\right)
\end{align*}
is determined as a $C^{k-2,\alpha}$ function; whence standard elliptic regularity theory shows that there exists a unique solution $u_{m+1}\in C^{k,\alpha}$.  Thus, the sequence of $n+1$ equations is uniquely solvable for the functions $u_1,u_2,\ldots,u_n$.  Since $k$ was arbitrary, these are in fact $C^{\infty}$.

Now that we have determined the $u_1,u_2,\ldots, u_n$, the proof is completed upon defining
\[
    P=\Sigma_{m= n+1}^{4n}\Sigma_{a+b+c+d=m} u_au_bL_cu_d
   t^m+\Sigma_{m=n+1}^{2n}\Sigma_{a+b=m} A_au_b t^m.
\]
\end{proof}

\section{The parabolic equation for the remainder}\label{sec:remain}

We now need to prove that there exists a unique solution
$v$ of Equation~\eqref{eq:remain} such that $v\to 0$ as $t\to 0$.
The first step in this direction shall be an
$L^2$ bound for solutions with small initial data near $t=0$.
To do this, we separate the terms in the equation according to their
order of growth in $t$. Recall that the equation is
\[
t\frac{\D v}{\D t}= u^2L v+\left(A+2\tilde u L\tilde u
      +t^nvL\tilde u-n\right)v+\left(\tilde u^2R_L\tilde u+R_A\tilde
      u+\frac{P}{t^n}\right).
\]
using the expansions of $A,L,$ and $P$,  and recalling that $(tD_t)^mR_L,(tD_t)^mR_A\in O(t)$, $m=0,1,2,\ldots,n$,
we write this as
\[
    t\frac{\D v}{\D t}= u^2L v+\left(A_0+2u_0 L_0 u_0-n
      +\psi_1+v\psi_2\right)v+ \psi_3,
\]
where $\psi_1,\psi_2, \psi_3 $ are $C^{\infty}$ functions $\psi_i:
\Sphere^2\times\R^+\to\R$  that satisfy $(tD_t)^m\psi_i\in O(t),\, m=0,1,2,\ldots,n$ when
regarded as functions $\psi_i:\R^+\to C^k(\Sphere^2)$. But
$L_0u_0=-A/u_0$, and so we have
\begin{equation}\label{eq:remainigen}
     t\frac{\D v}{\D t}= u^2L v+\left(-A_0-n
      +\psi_1+\psi_2 v\right)v+ \psi_3.
\end{equation}
 We would
like to prove $L^2$ bounds for the solution $v$.   To that end,
the following lemma is useful:
\begin{lemma}\label{lem:basic}
Let $L$ be as above, and suppose on $[0,\varepsilon]$ there  holds
$-\int_{\Sphere^2}\varphi L\varphi \, \dif A_{h}\geq\lambda_0
\int_{\Sphere^2}\varphi^2\, \dif A_h$ for all $\varphi\in H^1(\Sphere^2)$. Let $\psi_0:\R^2\to \R$, $\psi_0\in
C^1$, be such that $\psi_0(0,0)=1$.  Suppose that $v\in H^1(\Sphere^2)$.  Then there exists
$\delta\leq\varepsilon$ such that if $|t|, |v|<\delta$ then
$-\int_{\Sphere^2}v\psi_0(t,v) Lv \, \dif A_h\geq\frac{\lambda_0}{2}
\int_{\Sphere^2}v^2 \, \dif A_h$.
\end{lemma}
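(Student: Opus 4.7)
My plan is to view $\psi_0(t,v)$ as an $O(\delta)$ perturbation of the constant $1$ and absorb the resulting error into the hypothesised bound $-\int vLv\,\dif A_h\ge\lambda_0\int v^2\,\dif A_h$. Write $\psi_0(t,s)=1+\eta(t,s)$ with $\eta\in C^1$ and $\eta(0,0)=0$; the $C^1$ regularity gives, for $|t|,|v|<\delta\le 1$, the pointwise bounds $|\eta(t,v(x))|\le C_0\delta$ and $|\partial_s\eta(t,v(x))|\le C_0$ on $\Sphere^2$. Splitting off the perturbation yields
\[
      -\int_{\Sphere^2}v\,\psi_0(t,v)\,Lv\,\dif A_h\ =\ -\int_{\Sphere^2}vLv\,\dif A_h\ +\ E,\qquad E:=-\int_{\Sphere^2}v\,\eta(t,v)\,Lv\,\dif A_h,
\]
so it suffices to show $|E|\le C_1\delta\bigl(-\int vLv\,\dif A_h\bigr)+C_2\delta\int v^2\,\dif A_h$ with constants independent of $\delta$.

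To estimate $E$, write $L=\Delta_h-W$ with $W=\kappa(h)-8\pi\rho$, which is smooth and uniformly bounded on $[0,\varepsilon]\times\Sphere^2$. The zeroth-order piece contributes at most $C\delta\|W\|_\infty\int v^2$. For the Laplacian piece, the key observation is that $t$ enters $\eta$ only as a parameter, so that $\nabla[\eta(t,v(x))]=\partial_s\eta(t,v(x))\,\nabla v(x)$; an integration by parts then gives
\[
       -\int_{\Sphere^2} v\,\eta(t,v)\,\Delta_h v\,\dif A_h\ =\ \int_{\Sphere^2}\bigl(\eta+v\,\partial_s\eta\bigr)|\nabla v|^2\,\dif A_h,
\]
whose integrand is pointwise $O(\delta)$ under the hypotheses on $v$ and $t$. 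Combining the two contributions yields $|E|\le C_1\delta\int|\nabla v|^2+C_2\delta\int v^2$.

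The last step is to trade the Dirichlet integral for $-\int vLv$: integrating $vLv=v\Delta_hv-Wv^2$ by parts gives $\int|\nabla v|^2=-\int vLv-\int Wv^2\le -\int vLv+\|W\|_\infty\int v^2$. Plugging this into the estimate for $|E|$ and then invoking the hypothesis $-\int vLv\ge\lambda_0\int v^2$ produces
\[
     -\int_{\Sphere^2}v\,\psi_0\, Lv\,\dif A_h\ \ge\ (1-C_1\delta)\Bigl(-\int vLv\,\dif A_h\Bigr)-C_3\delta\int v^2\,\dif A_h\ \ge\ \bigl[(1-C_1\delta)\lambda_0-C_3\delta\bigr]\int v^2\,\dif A_h,
\]
and choosing $\delta$ small enough that the bracket exceeds $\lambda_0/2$ finishes the argument. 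I do not foresee a genuine obstacle; the only mild subtlety worth flagging is that ``$|v|<\delta$'' must be read pointwise (i.e.\ $\|v\|_{L^\infty}<\delta$) so that $\eta(t,v)$ is defined, is pointwise small, and $v\,\eta(t,v)$ lies in $H^1(\Sphere^2)$ with the expected chain-rule gradient.
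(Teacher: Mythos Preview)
Your argument is correct. The paper actually omits the proof of this lemma, remarking only that it ``is a fairly straightforward estimate''; the perturbation decomposition $\psi_0=1+\eta$, integration by parts, and absorption of the $O(\delta)$ Dirichlet term back into $-\int vLv$ constitute precisely the kind of direct computation the author has in mind, and the one point of care---that the hypothesis $|v|<\delta$ must be read as an $L^\infty$ bound so that $\eta(t,v)$ is defined, pointwise small, and the chain rule applies to $v\eta(t,v)\in H^1$---is one you have already noted.
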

The proof of this lemma is a fairly straightforward estimate and shall be omitted.
The lemma allows us to use techniques for linear  equations to
derive $L^2$ bounds, as contained in the next lemma.
\begin{lemma}
There exists $\delta,C>0$  such that if $0<t_1<\delta$ and $v$  is a
solution of Equation~\eqref{eq:remainigen} on $[t_0,t_1]$ with
$v|_{t_0}=0$ and $|v|<\delta$ then $\norm{v|_t}_{L^2(\Sphere^2)}\leq C t$.
\end{lemma}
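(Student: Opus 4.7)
The plan is to derive an $L^2$ energy estimate by multiplying Equation~\eqref{eq:remainigen} by $v$ and integrating over $\Sphere^2$. Writing $E(t)=\|v(t,\cdot)\|_{L^2(\Sphere^2)}^2$ and noting that $\D_t\,\dif A_h=\tilde H\,\dif A_h=t\,\dif A_h$ by the choice~\eqref{eq:h3}, this produces
\[
\tfrac{t}{2}E'(t)=\int_{\Sphere^2}v u^2 Lv\,\dif A_h+\int_{\Sphere^2}\bigl(-A_0-n+\psi_1+\psi_2 v+\tfrac{t^2}{2}\bigr)v^2\,\dif A_h+\int_{\Sphere^2}\psi_3 v\,\dif A_h,
\]
where the $t^2/2$ correction reflects the time-dependence of the area form. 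My goal is to reduce the right-hand side to $-\mu E(t)+C t^2$ and then invoke a Gronwall-type argument.

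For the diffusion term I apply Lemma~\ref{lem:basic}, identifying the coefficient $u^2$ with the $\psi_0$ of that lemma after absorbing the position-dependent factor $u_0^2$ into the operator, so that what remains is a $C^1$ function of the scalars $(t,v)$ taking value $1$ at the origin; this yields $\int_{\Sphere^2}v u^2 Lv\,\dif A_h\leq -\tfrac{\lambda_0}{2}E$ for $|t|,|v|<\delta$ and some $\lambda_0>0$. For the zero-order coefficient of $v^2$, since $A_0=1+\tfrac12|\chih_0|^2 u_0^2\geq 1$ and $\psi_1,\psi_2$ are $O(t)$ in $C^0(\Sphere^2)$, shrinking $\delta$ makes the parenthetical expression $\leq-\tfrac12$, contributing $\leq-\tfrac12 E$. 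For the source, $\psi_3\in O(t)$ gives $\|\psi_3\|_{L^2}\leq C_0 t$, and Young's inequality bounds $|\int_{\Sphere^2}\psi_3 v\,\dif A_h|\leq\tfrac12 C_0^2 t^2+\tfrac12 E$. Combining yields $\tfrac{t}{2}E'(t)\leq -\mu E(t)+\tfrac12 C_0^2 t^2$ for some $\mu>0$.

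To close, I rewrite this as $(t^{2\mu}E)'\leq C_0^2 t^{2\mu+1}$ using the integrating factor $t^{2\mu}$, then integrate from $t_0$ to $t$ with $E(t_0)=0$ to obtain $t^{2\mu}E(t)\leq\tfrac{C_0^2}{2\mu+2}t^{2\mu+2}$, whence $\|v(t,\cdot)\|_{L^2}\leq Ct$. The main obstacle lies in the diffusion step: one must reconcile the scalar $(t,v)$-dependent hypothesis of Lemma~\ref{lem:basic} with the fact that $u^2$ also varies over $\Sphere^2$ through $u_0$. The cleanest resolution is a perturbative argument showing that for $|t|,|v|$ small, the quadratic form $\varphi\mapsto-\int_{\Sphere^2}\varphi u^2 L\varphi\,\dif A_h$ is close to its unperturbed counterpart $\varphi\mapsto-\int_{\Sphere^2}\varphi u_0^2 L_0\varphi\,\dif A_{h_0}$, whose positive spectral gap on $H^1(\Sphere^2)$ follows from the boundedness of $u_0$ away from zero together with the hypothesised strict negativity of $L_0$.
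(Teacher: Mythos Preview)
Your energy-estimate strategy is essentially the paper's: both derive a differential inequality for an $L^2$-type quantity and integrate. The paper makes the substitution $\tau=\log t$ (equivalent to your integrating factor $t^{2\mu}$) and, rather than retaining the dissipative term $-\mu E$, simply discards it to reach $\tfrac{d}{d\tau}\int w^2\,\dif A_h\leq Ce^{2\tau}$, which integrates directly to the same conclusion.

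There is, however, a genuine gap in your treatment of the diffusion term. You multiply by $v$, obtaining $\int_{\Sphere^2} v\,u^2 Lv\,\dif A_h$, and then propose to write $u^2 L=(u^2/u_0^2)(u_0^2 L)$ so as to invoke Lemma~\ref{lem:basic} with the ``new'' operator $u_0^2 L$. For this you need the quadratic form $\varphi\mapsto-\int_{\Sphere^2}\varphi\,u_0^2 L_0\varphi\,\dif A_{h_0}$ to be bounded below by a positive multiple of $\int_{\Sphere^2}\varphi^2\,\dif A_{h_0}$, and you assert that this follows from the positive lower bound on $u_0$ together with the strict negativity of $L_0$. It does not. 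Setting $\psi=u_0\varphi$, a short computation gives
\[
-\int_{\Sphere^2}\varphi\,u_0^2 L_0\varphi\,\dif A_{h_0}
= -\int_{\Sphere^2}\psi L_0\psi\,\dif A_{h_0}
-\int_{\Sphere^2}\frac{|\nabla u_0|^2}{u_0^2}\,\psi^2\,\dif A_{h_0},
\]
and the second term can destroy the spectral gap unless $\lambda_0>\sup|\nabla u_0|^2/u_0^2$, which is nowhere assumed. What you do obtain is only a G{\aa}rding-type bound $-\int_{\Sphere^2}\varphi\,u_0^2 L_0\varphi\,\dif A_{h_0}\geq c\|\nabla\varphi\|_{L^2}^2-C\|\varphi\|_{L^2}^2$, and that constant $C$ must then be absorbed elsewhere.

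The paper sidesteps this by multiplying the equation by $v/u_0^2$ rather than by $v$. The diffusion term then reads $\int_{\Sphere^2} v\,(u^2/u_0^2)\,Lv\,\dif A_h$, so the \emph{operator} appearing in Lemma~\ref{lem:basic} remains $L$ itself---whose strict negativity is the standing hypothesis---while the scalar prefactor $u^2/u_0^2$ honestly reduces to $1$ at $t=0$, $v=0$. Your argument is easily repaired in one of two ways: either adopt the weight $v/u_0^2$ as the paper does, or keep your weight $v$ and note that the zero-order coefficient in~\eqref{eq:remainigen} supplies damping of size $A_0+n\geq 1+n$, which for $n$ chosen sufficiently large (and $n$ is at our disposal) swallows the constant $C$ coming from the G{\aa}rding bound.
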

\begin{proof}
We shall make the change of variables $t=e^{\tau}$ so that Equation~\eqref{eq:remainigen} becomes
\[
     \frac{\D v}{\D\tau}= u^2 L v -(A_0+n+O(e^{\tau})+O(e^{\tau})v)v+O(e^{\tau}).
\]
We define $\tau_0=\log t_0, \tau_1=\log t_1$.
Multiplying by $v/u^2_0$ and integrating yields
\[
    \frac{1}{2}\frac{d}{d\tau}\int_{\Sphere^2} \frac{v^2}{u_0^2}\, \dif A_h-\frac{1}{2}\int_{\Sphere^2}\frac{v^2}{u_0^2}\tilde H \, \dif A_h\leq\int_{\Sphere^2}\left( v \frac{u^2}{u_0^2}Lv+ \frac{C}{u_0^2} v e^{\tau}\right)\, \dif A_h,
\]
provided that we make an initial choice of $\delta$ such that the coefficient of the linear term is negative.
Using now the previous lemma with a possibly smaller choice of $\delta$ and using  $ve^{\tau}\leq \frac{1}{2}\left(av^2+\frac{1}{a}e^{2\tau}\right)$, we have
\begin{gather*}
     \frac{1}{2}\frac{d}{d\tau}\int_{\Sphere^2} \frac{v^2}{u_0^2}\, \dif A_h\leq -\frac{\lambda_0}{2}\int_{\Sphere^2} v^2\, \dif A_h + \frac{e^{\tau}}{2\min u_0^2}\int_{\Sphere^2} v^2\, \dif A_h\\
      +\frac{aC}{2\min u_0^2}\int_{\Sphere^2} v^2\, \dif A_h +\frac{C}{2a\min u_0^2}\max_{[\tau_0,\tau_1]}\text{area}_h(\Sphere^2) e^{2\tau},
\end{gather*}
where we have also used $\tilde H = e^{\tau}$. Thus, defining
$w=v/u_0$, choosing $a$ correctly, and choosing $\delta$  perhaps
smaller still, we have
\[
        \frac{1}{2}\frac{d}{d\tau}\int_{\Sphere^2} w^2 \, \dif A_h\leq Ce^{2\tau}.
\]
Integrating over $[\tau_0,\tau]$ we get
\[
    \frac{1}{2}\int_{\Sphere^2} w^2(\tau,\cdot) \, \dif A_h(\tau)\leq C\int_{\tau_1}^{\tau}e^{2\tau'}\, \dif \tau' \leq C\left(e^{2\tau}-e^{2\tau_0}\right)\leq Ce^{2\tau}.
\]
This yields the conclusion of the lemma since $t=e^{\tau}$.
 \end{proof}

Of course, using Moser iteration\footnote{This section, as well as the next one, makes heavy use 
of standard parabolic techniques; e.g. Moser iteration, Schauder estimates, etc. For more 
information on these see~\cite{lieberman}, for instance.  Of course, in our case these have been applied on a compact manifold.}, one can derive an analogous supremum bound:
\begin{lemma}
With the hypotheses of the last lemma, $\delta$ can be chosen such that $|v|\leq C t$ on $[t_0,t_1]$.
\end{lemma}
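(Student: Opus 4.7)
The plan is to upgrade the $L^2$-in-space bound $\norm{v(\cdot,t)}_{L^2(\Sphere^2)}\leq Ct$ of the previous lemma to the required supremum bound via Moser iteration on the compact manifold $\Sphere^2$. As before, I pass to the variable $\tau = \log t$, so that the equation takes the form
\[
   \D_\tau v = u^2 Lv + B(\tau,v)\,v + F(\tau),
\]
with $B$ uniformly bounded under the hypothesis $|v|<\delta$ and $F \in O(e^\tau)$ in every $C^k(\Sphere^2)$ norm. Since $\Sphere^2$ is closed, no spatial cutoff is needed, which simplifies the standard Moser setup considerably.

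For each integer $p \geq 1$, I would test the equation against $v^{2p-1}/u_0^2$, splitting $v = v_+ - v_-$ in the usual way to accommodate sign changes. Integration by parts on the principal term $u^2 Lv$, together with Lemma~\ref{lem:basic} applied pointwise in $\tau$ and Young's inequality on the forcing $F$, yields a differential inequality of the schematic form
\[
  \frac{d}{d\tau}\int\frac{v^{2p}}{u_0^2}\,\dif A_h + c\int u_0^2\,|\nabla v^p|^2\,\dif A_h \leq Cp\int v^{2p}\,\dif A_h + C\,e^{2p\tau},
\]
the absorption of the lower-order terms using the negativity of $L_0$ exactly as in the preceding lemma. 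The two-dimensional Sobolev embedding $H^1(\Sphere^2)\hookrightarrow L^4(\Sphere^2)$ then furnishes a gain factor $\kappa = 2 > 1$. Integrating in $\tau$ over a nested sequence of subintervals $I_{k+1}\subset I_k\subset [\tau_0,\tau_1]$ and iterating along $p_k = 2^k$, one obtains a Moser-type recursion of the schematic form
\[
  \norm{v}_{L^{2p_{k+1}}(\Sphere^2 \times I_{k+1})} \leq (Cp_k)^{1/p_k}\,\norm{v}_{L^{2p_k}(\Sphere^2\times I_k)} + O(e^\tau).
\]
Taking the base case to be the $L^2$ bound from the previous lemma and letting $k\to\infty$, the multiplicative constants telescope into a finite product, yielding $|v| \leq Ce^\tau = Ct$ on $[t_0,t_1]$.

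The main obstacle is the usual bookkeeping: one must verify that $\sum_k \log(Cp_k)/p_k < \infty$ so that the product of the iteration constants converges, and that the forcing term $e^{2p_k\tau}$ at level $p_k$ contributes, after taking the $1/(2p_k)$-root dictated by the $L^{2p_k}$ norm, only an amount of order $e^\tau$ at every stage --- precisely the rate claimed. This is a direct application of the standard parabolic Moser iteration (see, e.g., \cite{lieberman}); no geometric input beyond what has already been used in the preceding two lemmas is required, and the smallness assumption $|v|<\delta$ is preserved throughout the iteration since it is built into the hypotheses.
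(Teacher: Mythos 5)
Your argument follows the paper's route exactly: the paper's entire proof of this lemma is the one-line remark that the supremum bound follows from the $L^2$ bound of the preceding lemma by Moser iteration (citing \cite{lieberman}), and your proposal is a correct fleshing-out of precisely that iteration --- right base case, right Sobolev gain on the closed surface $\Sphere^2$, nested time intervals, and correct tracking of the $O(e^{2p\tau})$ forcing so that each level contributes only $O(e^{\tau})=O(t)$. Nothing further to compare.
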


We may now remove the initial assumption on the smallness of $v$:
\begin{prop}
There exists $C,\varepsilon>0$ such that any solution of Equation~\eqref{eq:remain} on $[t_1,\varepsilon]$ with $v|_{t_1}\equiv 0$ satisfies
$|v|\leq C t$ on $[t_1,\varepsilon]$.
\end{prop}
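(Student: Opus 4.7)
The plan is a continuation (bootstrap) argument that removes the smallness hypothesis $|v|<\delta$ from the preceding sup-norm lemma by forcing it to hold automatically on a sufficiently small time interval, starting from the zero initial datum.

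Let $C$ and $\delta$ be the constants furnished by the preceding (sup-norm) lemma. Choose $\varepsilon>0$ with $\varepsilon\leq\delta$ and $C\varepsilon\leq\delta/2$. Given $t_1\in(0,\varepsilon)$ and a solution $v$ of Equation~\eqref{eq:remain} on $[t_1,\varepsilon]$ with $v|_{t_1}\equiv 0$, set
\[
   T^{*}=\sup\{\,t\in[t_1,\varepsilon] : |v|<\delta \text{ on } [t_1,t]\times\Sphere^2\,\}.
\]
Since $v$ is continuous and vanishes at $t_1$, one has $T^{*}>t_1$.

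For every $t<T^{*}$ the smallness hypothesis of the previous lemma is satisfied on $[t_1,t]$, so the lemma applies (with $t_1$ playing the role of its $t_0$ and $t$ playing the role of its $t_1$), yielding $|v(t,\cdot)|\leq Ct\leq C\varepsilon\leq \delta/2$. Letting $t\nearrow T^{*}$ and invoking continuity gives $\sup_{[t_1,T^{*}]\times\Sphere^2}|v|\leq\delta/2<\delta$. If we had $T^{*}<\varepsilon$, uniform continuity of $v$ near $T^{*}$ would allow the strict inequality $|v|<\delta$ to persist on $[t_1,T^{*}+\eta]$ for some $\eta>0$, contradicting the maximality of $T^{*}$. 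Hence $T^{*}=\varepsilon$, and the bound $|v(t,\cdot)|\leq Ct$ propagates to the entire interval $[t_1,\varepsilon]$.

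The only point requiring attention is to check that $C$, $\delta$, and $\varepsilon$ may be chosen once and for all, independently of $t_1$; this is immediate because the constants in the previous lemma depend only on the coefficients of Equation~\eqref{eq:remainigen}, which are determined by the a priori fixed data $h$, $\rho$, and $\tilde u$. No substantive obstacle is expected — the real work was carried out in proving the conditional sup bound of the previous lemma, and the present continuation argument is essentially soft.
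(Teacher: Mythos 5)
Your proposal is correct and follows essentially the same continuation argument as the paper: both fix $\varepsilon$ so that $C\varepsilon\leq\delta/2$, track the maximal interval on which the smallness hypothesis of the preceding sup-norm lemma holds, and use the resulting bound $|v|\leq Ct\leq\delta/2$ to push past any putative maximal time. The only cosmetic difference is that the paper phrases the maximal interval in terms of $|v|\leq Ct$ while you phrase it in terms of $|v|<\delta$; the two are interchangeable here.
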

\begin{proof}
Let $\delta$ and $C$ be as in the previous lemma.  Choose $\varepsilon=\delta/2 C$.  
With this choice the conclusion holds.  Indeed, suppose otherwise.  Since $v=0$ at $t_1$, 
by continuity, it is clear that the inequality must hold on some interval $[t_1,t']$; let 
this be the maximal interval for which $|v|\leq C t$. But we have at $t'$   
that $|v|_{t'}\leq \delta/2$, and so it is clear that $|v|<\delta$ for some slightly 
larger interval $[t_1,t'+\varepsilon']$; whence the hypotheses of the previous lemma are 
fulfilled, which implies that $|v|\leq Ct$ on $[t_1,t'+\varepsilon']$ as well.
\end{proof}
Applying standard regularity theory to the equation for $v$ with time variable $\tau$ 
we obtain similar bounds for the derivatives of $v$.  To state the result, let $\D_t$ 
denote the derivative in the time direction, and let $\nabla$ denote covariant 
differentiation along $\Sphere^2$.  One has
\begin{prop}
For $t_0>0$, solutions of Equation~\eqref{eq:remain} on $[t_0,T]\times\Sphere^2$ are $C^{\infty}$. Moreover,
for any integers $k,l$, there exists
$\varepsilon$ and $C$ such that any solution of Equation~\eqref{eq:remain} on $[t_1,\varepsilon]$ with $v|_{t_1}\equiv 0$ satisfies
$\norm{\nabla^k(t\D_t)^l v}_{C^0(\Sphere^2)}\leq C t$ on $[t_0,\varepsilon]$.
\end{prop}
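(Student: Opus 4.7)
The plan is to exploit the logarithmic time change $\tau=\log t$, under which $t\D_t=\D_\tau$ and Equation~\eqref{eq:remainigen} becomes a non-degenerate quasilinear parabolic equation
\[
   \D_\tau v = u^2 L v + b(\tau,x,v)\,v + \psi_3(\tau,x),
\]
whose coefficients depend smoothly on $e^\tau$ and remain bounded in every $C^k$ norm uniformly as $\tau\to -\infty$, while the source satisfies $\|\psi_3(\tau,\cdot)\|_{C^k(\Sphere^2)}\leq Ce^\tau$ for every $k$ (by construction of the Taylor expansion and the fact that $(tD_t)^m\psi_i\in O(t)$). In these variables the previous proposition reads $\|v(\tau,\cdot)\|_{C^0}\leq Ce^\tau$, and the task is to upgrade this to the analogous decay of every derivative via a Schauder bootstrap on unit parabolic cylinders in $\tau$.

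First I would handle the qualitative assertion. On $[t_0,T]\times\Sphere^2$ with $t_0>0$ the equation is uniformly parabolic with smooth coefficients (noting that $u=\tilde u+vt^n$ is bounded away from zero once $\varepsilon$ is small, thanks to the supremum bound $|v|\leq Ct$ and the positivity of $u_0$). Standard interior Schauder theory, applied iteratively to spatial difference quotients and to $\D_\tau$ differences, yields $C^\infty$ regularity in both $\tau$ and the spatial variables.

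For the quantitative step I would work on the cylinder $Q_\tau=[\tau-1,\tau]\times\Sphere^2$. Interior parabolic Schauder estimates give
\[
   \|v\|_{C^{2+\alpha,1+\alpha/2}([\tau-\tfrac{1}{2},\tau]\times\Sphere^2)}
     \leq C\bigl(\|v\|_{C^0(Q_\tau)}+\|\psi_3\|_{C^\alpha(Q_\tau)}\bigr)
     \leq Ce^\tau,
\]
with $C$ independent of $\tau$ because of the uniform bounds on the coefficients. For higher-order control I would induct on $k+l$: differentiating the equation in the time and space directions, $w=\nabla^k(t\D_t)^l v=\nabla^k \D_\tau^l v$ satisfies a linear parabolic equation of the same structure whose forcing term involves only strictly lower-order derivatives of $v$ times factors that are $O(1)$ or $O(e^\tau)$. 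By the inductive hypothesis each such factor is $O(e^\tau)=O(t)$, and a further application of Schauder estimates on $Q_\tau$ promotes this to the desired bound $\|\nabla^k(t\D_t)^l v\|_{C^0}\leq Ct$.

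The main obstacle is guaranteeing that the Schauder constants do not blow up as one approaches the degenerate point $t=0$, i.e.\ $\tau\to-\infty$. The logarithmic time change is precisely what makes this work: the rewritten equation looks asymptotically autonomous on unit-length cylinders in $\tau$, so the standard interior Schauder constants apply uniformly. A secondary technicality is that the initial slice $t_1$ may lie inside $Q_\tau$; since $v|_{t_1}\equiv 0$, I would invoke boundary Schauder estimates with vanishing initial data, which yield the same decay structure, so the bound on $[t_0,\varepsilon]$ goes through irrespective of the location of $t_1$.
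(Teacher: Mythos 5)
Your proposal is correct and follows exactly the route the paper intends: the paper omits the proof entirely, stating only that the bounds follow by ``applying standard regularity theory to the equation for $v$ with time variable $\tau$,'' and your logarithmic change of variables plus uniform Schauder bootstrap on unit cylinders in $\tau$ is a faithful fleshing-out of that one-line indication.
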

Finally, this may be used together with the Ascoli-Arzela theorem to obtain the
existence of a solution of Equation~\eqref{eq:remain} with the desired growth properties.
\begin{theorem}
There exists $\varepsilon>0$ such that Equation~\eqref{eq:remain} has a unique
classical solution on $(0,\varepsilon]\times\Sphere^2$ in the class $v\in O(t)$.
Furthermore, for any integers $k,l$ there exists a constant $C$ such that
$\norm{\nabla^k(t\D_t)^l v}_{C^0(\Sphere^2)}\leq C t$ on $[0,\varepsilon]$.
\end{theorem}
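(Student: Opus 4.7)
The plan is to construct $v$ as a subsequential limit of solutions to a sequence of non-degenerate initial-value problems with vanishing data at times $t_n \downarrow 0$, and then to establish uniqueness within the class $v \in O(t)$ by an energy estimate on the difference of two solutions. All the hard analytic work has effectively been done in the preceding proposition; what remains is a compactness argument and a Gronwall-type estimate.

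For existence, fix $\varepsilon > 0$ smaller than the $\varepsilon$'s furnished by the preceding proposition for the finitely many derivative orders used in each pass of Arzela--Ascoli. Choose a sequence $t_n \downarrow 0$ with $t_1 < \varepsilon$. For each $n$, Equation~\eqref{eq:remain} with zero initial data at $t = t_n$ is uniformly parabolic on $[t_n, \varepsilon] \times \Sphere^2$, so standard parabolic theory together with the a priori bounds of the preceding proposition furnishes a $C^\infty$ solution $v_n$ on all of $[t_n, \varepsilon] \times \Sphere^2$ satisfying
\[
\norm{\nabla^k (t\D_t)^l v_n}_{C^0(\Sphere^2)} \leq C_{k,l}\, t \quad \text{on } [t_n, \varepsilon],
\]
with $C_{k,l}$ independent of $n$. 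On each compact subinterval $[\tau, \varepsilon] \subset (0,\varepsilon]$ the $v_n$ are eventually defined and uniformly bounded in every $C^k$ norm, so Arzela--Ascoli together with a diagonal extraction yields a subsequence converging in $C^k_{\mathrm{loc}}((0,\varepsilon] \times \Sphere^2)$ for every $k$ to a $C^\infty$ limit $v$ that solves \eqref{eq:remain}, inherits the stated bounds, and, being $O(t)$, extends continuously to $t=0$ with $v|_{t=0} = 0$.

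For uniqueness, suppose $v_1, v_2$ are two solutions in the class $v \in O(t)$ and set $w = v_1 - v_2$, $u_i = \tilde u + v_i t^n$. Subtracting the two versions of \eqref{eq:remainigen}, using $u_1^2 - u_2^2 = (u_1 + u_2)t^n w$, and invoking the a priori bounds on $v_i$ and $L v_i$, the difference satisfies a linear equation
\[
t \D_t w = u_1^2 L w + \mu\, w,
\]
in which $\mu = -A_0 - n + O(t)$ is strictly negative for $\varepsilon$ small (since $A_0 \geq 1$). Changing variables to $\tau = \log t$, testing against $w/u_0^2$, and applying Lemma~\ref{lem:basic} to the top-order term exactly as in the $L^2$-bound lemma, the $\tilde H = e^\tau$ volume-form correction is absorbed by shrinking $\varepsilon$, and the sign of $\mu$ controls the zeroth-order term, giving
\[
\frac{d}{d\tau} \int_{\Sphere^2} \frac{w^2}{u_0^2}\, \dif A_h \leq 0
\]
on $(-\infty, \log \varepsilon]$. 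Since $w \in O(t) = O(e^\tau)$, the energy tends to zero as $\tau \to -\infty$, so it is identically zero and $w \equiv 0$.

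The main obstacle I expect is the uniqueness step at the degenerate time $t = 0$, where the equation fails to be parabolic and no standard comparison principle applies. The class restriction $v \in O(t)$ is exactly what forces the relevant energy to vanish at $\tau = -\infty$ under the logarithmic time change, and the combination of the negativity of $L_0$ (through Lemma~\ref{lem:basic}) with the strictly negative sign of $-A_0 - n$ is what turns the energy identity into a Gronwall inequality with the correct sign. Existence, by contrast, is relatively routine once the uniform derivative bounds of the preceding proposition are in hand.
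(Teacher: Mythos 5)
Your proposal is correct and follows essentially the same route as the paper: existence via solutions with vanishing data at times $t_n\downarrow 0$, the uniform bounds of the preceding proposition, and Arzela--Ascoli with diagonal extraction; uniqueness via an $L^2$ energy estimate for the difference in the variable $\tau=\log t$, with the $O(t)$ hypothesis forcing the energy to vanish as $\tau\to-\infty$. The only cosmetic differences are that the paper extracts the limit of $\omega_n=e^{-\tau}v_n$ rather than of $v_n$ directly, and derives strict exponential decay $\frac{d}{d\tau}\norm{\delta v}_{L^2}\leq -c\norm{\delta v}_{L^2}$ where you settle for monotonicity, which suffices equally well.
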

\begin{proof}
Using the equation with $\tau$ instead as the time variable, this corresponds
to the existence of a solution on $(-\infty,\log\varepsilon]\times\Sphere^2$
that grows like $e^{\tau}$.   To obtain this, let $\varepsilon$ be as in the previous proposition,
let $\tau_n$ be a sequence with $\tau_n\to-\infty$, and let $v_n$ be a sequence of solutions on
$[\tau_n,\log\varepsilon)$ with $v_n|_{\tau_n}\equiv 0$; the latter exist from standard existence theory.
Consider now the functions $\omega_n=e^{-\tau}v_n$.  By virtue of the previous proposition, for every
integer $k$ these satisfy $\norm{\omega_n}_{C^k([\tau_n,\log\varepsilon]\times\Sphere^2)}\leq C$.  By
using the Ascoli-Arzela theorem, we may extract a subsequence $v_m$ that converges in $C^k$ on any
set $[T,\log\varepsilon]\times\Sphere^2, T>-\infty$ to a function $\omega\in C^k$.
Inserting this sequence into Equation~\eqref{eq:remain} one finds that $v=e^{\tau}\omega$ must
be a solution of Equation~\eqref{eq:remain} on $(-\infty,\log\varepsilon]\times\Sphere^2$.
Furthermore, the bounds on each $v_n$ are maintained through the limiting process so that $v$
satisfies the bounds in the conclusion.

To see that the above constructed solution is unique, let $v_1,v_2$ be two such solutions of class $O(t)$.
Then computing the equation for the difference $\delta v =v_2-v_1$, we arrive at an equation of the form
\[
   \frac{\D\delta v}{\D\tau}= \tilde u^2_2L\delta v -\left(A_0+n-\psi_1-\psi_2(v_1+v_2)-t^n(\tilde u_1+\tilde u_2)L v_1\right)\delta v,
\]
where $\tilde u_i= u=u_0+u_1t+\cdots+u_nt^n+v_it^n$, $i=1,2$.  But by virtue of the bounds on $v_1,v_2$,
 we may assume $\tau$ is small enough that
\[
    A_0+n-\psi_1-\psi_2 (v_1+v_2)-t^n(\tilde u_1+\tilde u_2)L v_1> c> 0.
\]
Furthermore, with the help of Lemma~\ref{lem:basic}, we may assume that $\tilde u_2^2 L$ is strictly negative.
Thus, multiplying by $\delta v$ and integrating, we arrive at an inequality of the form
\[
     \frac{d}{d\tau}\norm{\delta v}_{L^2(\Sphere^2)}\leq -c \norm{\delta v}_{L^2(\Sphere^2)},
\]
from which we get
\[
     \norm{\delta v|_{\tau}}_{L^2(\Sphere^2)}\leq \norm{\delta v}_{L^2(\Sphere^2)}|_{\tau_0}e^{-c(\tau-\tau_0)}.
\]
But using the assumed bounds on $v_1,v_2$ again, for any $\varepsilon'>0$, we may assume
that $\norm{\delta v|_{\tau_0}}_{L^2(\Sphere^2)}<\varepsilon'$ by assuming $\tau_0$ is small enough;
whence $\norm{\delta v|_{\tau}}_{L^2(\Sphere^2)}\leq\varepsilon'e^{-c(\tau-\tau_0)}$.  Then letting $\tau_0\to-\infty$
we obtain that $\norm{\delta v|_{\tau}}_{L^2(\Sphere^2)}=0$ for any $\tau$.
\end{proof}

\section{Global Existence for the Parabolic Scalar Curvature Equation}

In this section it is established that negativity of the operator 
$L=\Delta_h-\left(\kappa(h)-8\pi\rho\right)$ is enough to guarantee 
global existence results for the parabolic scalar curvature equation 
in the case of positive bounded initial data.
This is contained in 
Theorem~\ref{thm:exist} at the end of the section.  This result is 
proved with the help of several intermediate estimates, which are contained 
in the next two lemmata, and the proposition that follows.

The first estimate concerns the \begin{it}subcritical equation\end{it}
\begin{equation}\label{eq:subcrit}
     \tilde H \D_r u=u^2\Delta u - f u^{3- \alpha }+ Au.
\end{equation}
One is able to derive bounds for this equation in the case that $f$ is positive on
average and $f\in L^q(\Sphere^2)$ for appropriate $q$.
Specifically, there holds
\begin{lemma}\label{lem:subcrit}
Let $q>2$. Suppose that $f\in L^q(\Sphere^2)$ and
$f_0=\int_{\Sphere^2}f\, \dif A_h>0$.  Then solutions of the subcritical
equation, Equation~\eqref{eq:subcrit}, on $[r_1,r_2]$ are bounded
for any positive initial data.  That is, there holds
\[
      u\leq  C,
\]
where $C$ depends on
\begin{gather*}
\inf_{[r_1,r_2]}f_0,\,
\sup_{[r_1,r_2]}\norm{f|_r}_{L^q(\Sphere^2)},\,
\inf_{[r_1,r_2]\times\Sphere^2}\tilde H,\\
\sup_{[r_1,r_2]\times\Sphere^2}|\tilde H|,\,
\sup_{[r_1,r_2]\times\Sphere^2}|\D_r\tilde H|,\,
\sup_{[r_1,r_2]\times\Sphere^2}|A|,\,
\norm{u|_{r_1}}_{L^3(\Sphere^2)},\, \alpha,\, |r_2-r_1|,
\end{gather*}
and of course the family $h$ on $[r_1,r_2]$.
\end{lemma}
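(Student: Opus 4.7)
The plan is a two-stage a priori estimate: first upgrade the $L^3$ initial bound to an $L^p$ bound for arbitrarily large $p$ via an energy argument that exploits the positivity of the mean of $f$, and then bootstrap from $L^p$ to $L^\infty$ by a Moser-type iteration. Both stages rely on the strict subcriticality $\alpha>0$, which gives the reaction term $-fu^{3-\alpha}$ enough strength to absorb the cross terms arising from the oscillation of $f$ and the $r$-dependence of the metric $h$.

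For the energy step I would divide the equation by $\tilde H>0$, test against $pu^{p-1}$, and integrate over $(\Sphere^2,h(r))$, remembering that the area form evolves as $\D_r\dif A_h=\tilde H\,\dif A_h$. This gives
\[
\D_r\!\int_{\Sphere^2}\! u^p\,\dif A_h=\int_{\Sphere^2}\!\frac{pu^{p-1}}{\tilde H}\bigl[u^2\Delta u-fu^{3-\alpha}+Au\bigr]\dif A_h+\int_{\Sphere^2}\! u^p\tilde H\,\dif A_h,
\]
and after integration by parts the principal term produces a coercive dissipation of the form $c_p\int\tilde H^{-1}|\nabla u^{(p+2)/2}|^2\,\dif A_h$, plus a lower-order piece involving $\nabla\tilde H$ that is controlled by the smoothness of the family $h$. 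The decisive term is $\int\tilde H^{-1}fu^{p+2-\alpha}\,\dif A_h$, which I would split as $f=\bar f(r)+\tilde f$ with $\bar f$ the $h$-mean of $f$: by hypothesis $\bar f$ is uniformly bounded below by a positive constant depending only on $\inf_{[r_1,r_2]}f_0$ and the area of $\Sphere^2$ in $h$, giving a genuine coercive reaction $\bar f\int u^{p+2-\alpha}$. The oscillation is estimated by H\"older against $\|\tilde f\|_{L^q}$; since $q>2$ implies $q'<2$, the Sobolev embedding $H^1(\Sphere^2)\hookrightarrow L^s$ for any finite $s$, together with Young's inequality, lets the subcritical gap $\alpha>0$ absorb the cross term into the gradient and the coercive reaction. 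The linear $Au$ contribution is lower order. The outcome is a Gronwall-type inequality for $y(r)=\int u^p\,\dif A_h$ that yields the $L^p$ bound for every finite $p$ in terms of the listed data.

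For the Moser stage, the same computation applied at the level of $w=u^{(p+2)/2}$, combined with Sobolev embedding on $\Sphere^2$, produces an iteration of the form $\|u\|_{L^{\gamma(p+2)}}\le C_p^{1/(p+2)}\|u\|_{L^{p+2}}$ with some fixed $\gamma>1$. Starting the iteration at the $p$ that corresponds to the $L^3$ datum and verifying that $\sum_k(p_k)^{-1}\log C_{p_k}$ is summable (the standard Moser bookkeeping) produces the supremum bound with the stated dependence on the parameters; the interval length $|r_2-r_1|$ enters through the Gronwall factor in the base step.

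The principal obstacle will be controlling the oscillation $\tilde f$ without pointwise positivity of $f$: the assumption $q>2$ is essentially sharp in that it is precisely this integrability which allows H\"older to leave a positive gap that the subcritical exponent $3-\alpha$ can close. Secondary but delicate points are the quasilinear coefficient $u^2$ in the principal term, which forces the natural energy to involve $u^{p+2}$ rather than $u^p$, and the careful tracking of all Sobolev and geometric constants so that the final bound depends only on the parameters advertised in the statement.
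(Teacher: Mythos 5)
Your overall architecture (an $L^p$ energy estimate exploiting the positive mean of $f$, followed by Moser iteration) matches the paper's, but the specific mechanism you propose for the energy step has a gap that I do not think closes as written. The root difficulty is that the only coercive bulk term available is $-\bar f\int u^{p+2-\alpha}\,\dif A_h$, whose exponent exceeds that of the evolved quantity $\int u^p\,\dif A_h$ by only $2-\alpha$; consequently any error term of size $\int u^{p+2}\,\dif A_h$ is supercritical --- it can be neither absorbed by the reaction (wrong power) nor handled by Gronwall (too strong relative to $\int u^p$). Your computation produces such terms in two places. First, dividing by $\tilde H$ before testing against $pu^{p-1}$ generates, upon integration by parts, the cross term $p\int u^{p+1}\tilde H^{-2}\nabla\tilde H\cdot\nabla u\,\dif A_h$; since $u^{p+1}|\nabla u|=\frac{2}{p+2}\,w\,|\nabla w|$ with $w=u^{(p+2)/2}$, Young's inequality turns this into $\epsilon\int|\nabla w|^2+C_\epsilon\int u^{p+2}$, and the second piece is exactly the supercritical term. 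Second, your absorption of the oscillation via H\"older against $\norm{\tilde f}_{L^q}$ and the embedding $H^1(\Sphere^2)\hookrightarrow L^s$ reintroduces $\norm{w}_{L^2}^2=\int u^{p+2}$ through the inhomogeneous part of the $H^1$ norm. The second issue is repairable by using the homogeneous Gagliardo--Nirenberg inequality anchored at $\norm{w}_{L^1}$ (a computation shows the combined Young exponent is $1-O(\alpha/p)<1$), but the first is not, since $\nabla\tilde H$ is fixed data of no particular sign or size.

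The paper sidesteps both problems. It does not divide by $\tilde H$: it keeps $\int\tilde H\,\D_r u^3$ intact, rewrites it as $\frac{d}{dr}\int\tilde Hu^3-\int\bigl(\D_r\tilde H+\tilde H^2\bigr)u^3$, and tests with $u^2$, so the principal term is $\int u^4\Delta u=-4\int u^3|\nabla u|^2$ with no metric-derivative cross term. And instead of estimating the oscillation of $f$ by duality, it writes $f=f_0+\Delta_h\omega$ and integrates the $\Delta_h\omega$ part by parts onto $\nabla\omega$; the hypothesis $q>2$ enters through the pointwise bound $\sup|\nabla\omega|\le C\norm{f-f_0}_{L^q}$, and the reaction term becomes $-\int u^{5-2\alpha}\bigl(u^{\alpha}f_0-c_\alpha|\nabla\omega|^2\bigr)$, whose integrand is nonpositive except where $u$ is pointwise bounded by $\bigl(c_\alpha|\nabla\omega|^2/f_0\bigr)^{1/\alpha}$, so the whole term is bounded by a constant. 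This yields a clean Gronwall inequality for $\int\tilde Hu^3$ directly; note also that only the $L^3$ bound uniform in $r$ is needed to launch the Moser iteration, so the detour through all finite $p$ is unnecessary. I would redo your energy step with $\tilde H$ kept inside the $r$-derivative and with the potential decomposition of $f$ in place of the mean--oscillation splitting.
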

\begin{proof}
Using Moser iteration  one may establish a supremum bound as long as one has a bound on
$\norm{u}_{L^p([r_1,r_2]\times\Sphere^2)}$ for $p>2$.
Hence, it suffices for our purposes to establish an $L^3$ bound, which we now do. We shall actually
obtain a bound on $\norm{u|_r}_{L^3(\Sphere^2)}$ for each $r\in [r_1,r_2]$.

By standard elliptic theory on a compact manifold, one may write $f=f_0+\Delta_h\omega$.  
Multiplying the subcritical  equation by $u^2$ and integrating over $\Sphere^2$ we obtain
\begin{gather*}
      \int_{\Sphere^2}\frac{\tilde H}{3}\D_r u^3\, \dif A_h = \int_{\Sphere^2} \left(-4 u^3|\nabla u|^2-(f_0+\Delta_h\omega) u^{5-\alpha}+A u^3\right)\, \dif A_h\\
      =\int_{\Sphere^2} \left(-\frac{16}{25} |\nabla u^{\frac{5}{2}}|^2
      -f_0 u^{5-\alpha}+(5-\alpha)\nabla\omega\cdot u^{4-\alpha}\nabla u+A u^3\right)\, \dif A_h.\\
\end{gather*}
We now estimate the middle term in the integrand:
\begin{gather*}
(5-\alpha)\nabla\omega\cdot u^{4-\alpha}\nabla u=
(5-\alpha)u^{\frac{5}{2}-\alpha}\nabla\omega\cdot
u^{\frac{3}{2}}\nabla u\\
\leq\frac{(5-\alpha)}{2}\left(u^{5-2\alpha}\frac{1}{\varepsilon}|\nabla\omega|^2
+\varepsilon(u^{\frac{3}{2}})^2|\nabla u |^2\right)\\
=\frac{(5-\alpha)}{2}\left(u^{5-2\alpha}\frac{1}{\varepsilon}|\nabla\omega|^2
+\frac{4\varepsilon}{25}|\nabla u^{\frac{5}{2}}|^2\right).
\end{gather*}
Thus, choosing $\varepsilon=8/(5-\alpha)$ we obtain
\begin{gather*}
      \int_{\Sphere^2}\frac{\tilde H}{3}\D_r u^3\, \dif A_h\leq
       -\int_{\Sphere^2}\left(u^{5-\alpha}f_0-\frac{(5-\alpha)^2}{16}|\nabla\omega|^2 u^{5-2\alpha}\right)\, \dif A_h
       +\int_{\Sphere^2} A u^3\, \dif A_h\\
       =-\int_{\Sphere^2}u^{5-2\alpha}\left(u^{\alpha}f_0-\frac{(5-\alpha)^2}{16}|\nabla\omega|^2\right)\, \dif A_h
       +\int_{\Sphere^2} A u^3\, \dif A_h.
\end{gather*}
But, the integrand of the first term on the right can only be negative when
\[
    u < \left(\frac{|\nabla\omega|^2(5-\alpha)^2}{16 f_0}\right)^{\frac{1}{\alpha}}.
\]
Whence
\[
      \int_{\Sphere^2}\frac{\tilde H}{3}\D_r u^3\, \dif A_h\leq \int_{\Sphere^2}\left(\frac{25|\nabla\omega|^2}{16
                f_0}\right)^{\frac{5}{\alpha}-1}f_0\, \dif A_h+\int_{\Sphere^2} A u^3\, \dif A_h.
\]
Using now the fact that $|\nabla\omega(r,\cdot)|\leq C \norm{(f-f_0)|_r}_{L^q(\Sphere^2)}$ (see~\cite{AU}), we obtain
\[
         \D_r\int_{\Sphere^2} \tilde H u^3\, \dif A_h \leq C+C\int_{\Sphere^2} \tilde H u^3\, \dif A_h,
\]
where $C$ has the dependencies as in the hypothesis of the
theorem.
%with the exceptions that it does not yet depend on
%$\norm{u}_{L^3}(\Sphere^2)|_{r_1}$ and $|r_2-r_1|$.
Integrating this yields $\int_{\Sphere^2} \tilde H u^3\dif A_h \leq C$ on $[r_1,r_2]$, 
which in turn yields the desired bound upon incorporating the lower bound for $\tilde H$ into $C$: 
$\int_{\Sphere^2} u^3\dif A_h \leq C$ on $[r_1,r_2]$.

The supremum bound is then established via Moser iteration as in the remarks at the beginning of the proof.
\end{proof}

By the next lemma, given a bound on $\norm{u}_{L^{\varepsilon}(\Sphere^2,h)}$, for some $\varepsilon>0$,
we can ``absorb'' a factor of $u^{\alpha}$ into $f$ in order to write the 
parabolic scalar curvature equation as the subcritical equation with $f$ replaced by 
$\tilde f=f u^{\alpha}$, where
it can be arranged that $\tilde f$ is positive on average.
\begin{lemma}
Let $\varepsilon>0$.
Suppose that for all $r\in [r_1,r_2]$ we have a bound $\norm{u}_{L^{\varepsilon}(\Sphere^2)}\leq C$, $u\geq c>0$.  
Let $f_1>0$ be such that $\int_{\Sphere^2} f\, \dif A_{h}\geq f_1$. Then there exists 
$0<\alpha<\varepsilon/2$, $\alpha= \alpha(c,C,f_0,f,\varepsilon)$, such  that
\[
     \int_{\Sphere^2} fu^{\alpha}\, \dif A_h\geq \frac{1}{2}\int_{\Sphere^2} f\, \dif A_h.
\]
Furthermore, for $q=\varepsilon/\alpha>2$, one has that $\norm{f u^{\alpha}}_{L^{q}(\Sphere^2)}\leq C'$, 
where $C'$ is a constant $C'=C'(c,C,f_0,f,\varepsilon)$.
\end{lemma}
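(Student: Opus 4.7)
The plan is to take $\alpha$ small and argue by continuity: as $\alpha\to 0^+$ we have $u^\alpha\to 1$ pointwise (using $u\geq c>0$), so both $\int f u^\alpha\,\dif A_h$ and $\norm{fu^\alpha}_{L^{\varepsilon/\alpha}}$ will be close to their $\alpha=0$ values, with the needed uniform control supplied by the $L^\varepsilon$-bound on $u$.

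For the first inequality I would apply the dominated convergence theorem to the map $\alpha\mapsto\int_{\Sphere^2} fu^\alpha\,\dif A_h$ on $(0,\varepsilon/2]$. The key elementary estimate is that on $\{u\leq 1\}$ one has $u^\alpha\leq 1$, while on $\{u>1\}$ one has $u^\alpha\leq u^{\varepsilon/2}$, so overall $u^\alpha\leq 1+u^{\varepsilon/2}$. Assuming $f$ is (smooth, hence) bounded, the integrand is dominated by the $\alpha$-independent function $\norm{f}_{L^\infty}(1+u^{\varepsilon/2})$, whose integral is controlled by Cauchy--Schwarz together with $\norm{u}_{L^\varepsilon}\leq C$. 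Since $u^\alpha\to 1$ pointwise, dominated convergence yields $\int_{\Sphere^2} fu^\alpha\,\dif A_h\to\int_{\Sphere^2} f\,\dif A_h\geq f_1>0$, so for all $\alpha$ below some threshold $\alpha_0=\alpha_0(c,C,f_0,f,\varepsilon)$ the first inequality holds.

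For the second inequality, with $q=\varepsilon/\alpha$ a direct computation gives
\[
   \norm{fu^\alpha}_{L^q(\Sphere^2)}^q
   = \int_{\Sphere^2}|f|^q u^{\alpha q}\,\dif A_h
   = \int_{\Sphere^2}|f|^q u^{\varepsilon}\,\dif A_h
   \leq \norm{f}_{L^\infty}^q\norm{u}_{L^\varepsilon}^\varepsilon
   \leq \norm{f}_{L^\infty}^q C^\varepsilon,
\]
so $\norm{fu^\alpha}_{L^q(\Sphere^2)}\leq\norm{f}_{L^\infty}C^\alpha\leq\norm{f}_{L^\infty}\max(1,C)$, furnishing the desired constant $C'$.

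There is no genuine obstacle: everything reduces to the elementary observation $x^\alpha\to 1$ as $\alpha\to 0$ combined with the one-sided estimate $u^\alpha\leq 1+u^{\varepsilon/2}$, which converts the hypothesized $L^\varepsilon$-control on $u$ into the $\alpha$-uniform integrable majorant required by dominated convergence. The only mild subtlety worth flagging is that the argument uses $f\in L^\infty$, which is harmless in the intended geometric application where $f$ is smooth.
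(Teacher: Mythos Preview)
Your argument for the $L^q$ bound on $fu^\alpha$ is correct and in fact cleaner than the paper's one-line appeal to H\"older. The first part, however, has a genuine gap concerning \emph{uniformity}.

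The lemma is asking for a single $\alpha$, depending only on the constants $c,C$ (together with $f$, $f_0$, $\varepsilon$), such that the inequality $\int fu^\alpha\geq\tfrac12\int f$ holds for \emph{every} $u$ obeying the stated bounds---in the application, for every $r\in[r_1,r_2]$. Dominated convergence, as you invoke it, only says: for \emph{this particular} $u$, the map $\alpha\mapsto\int fu^\alpha$ is continuous at $0$, hence some threshold $\alpha_0$ exists. It does not say that $\alpha_0$ can be chosen uniformly over all $u$ with $\norm{u}_{L^\varepsilon}\leq C$ and $u\geq c$; the dependence $\alpha_0=\alpha_0(c,C,f_0,f,\varepsilon)$ is asserted but not established. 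Your approach can be repaired by making the estimate quantitative: split $\Sphere^2$ into $\{u\leq M\}$ and $\{u>M\}$; on the first set $|u^\alpha-1|\leq\max(1-c^\alpha,\,M^\alpha-1)$, while on the second $|u^\alpha-1|\leq u^{\varepsilon/2}$ and Chebyshev plus the $L^\varepsilon$ bound control the contribution. Choosing first $M$ large (in terms of $C,\varepsilon,f$) and then $\alpha$ small (in terms of $c,M$) gives $\int|f|\,|u^\alpha-1|$ small uniformly in $u$.

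The paper proceeds differently and more directly. It normalizes to a probability measure, decomposes $f=f_+-f_-$, bounds $\int f_+u^\alpha\geq c^\alpha\int f_+$ using $u\geq c$, and controls $\int f_-u^\alpha$ by two H\"older inequalities to obtain $\int f_-u^\alpha\leq\|f_-\|_{L^p}\,\|u\|_{L^\varepsilon}^\alpha$ for a suitable $p$ close to $1$. The threshold $\alpha$ is then chosen explicitly so that $c^\alpha>1-\delta$ and $C^\alpha<\sqrt{1+\delta}$ for an appropriate $\delta$, which makes the dependence on $c,C$ (and not on the particular $u$) completely transparent. Your route is conceptually simpler, but the paper's hands-on estimate delivers the required uniformity without extra work.
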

\begin{proof}
For the time being, we divide the area element $\, \dif A_h$ by $\text{area}_h(\Sphere^2)$ 
so that the total measure is 1: $\dif \mu=\, \dif A_h/\text{area}_h(\Sphere^2)$.  
Up until the last line of the proof any integrals are over $\Sphere^2$ with respect to $d\mu$; this holds also for any norms.      
Thus, 
our bound on $\norm{u}_{L^{\varepsilon}(\Sphere^2)}$ becomes 
$\norm{u}_{L^{\varepsilon}}\leq C /A^{\frac{1}{\varepsilon}}$, but since $\varepsilon$ is 
here fixed, we shall merely rename our constant $C$.

Let $0<\delta<\frac{1}{2}\int f/\norm{f}_{L^1}$, and let $\alpha<
\varepsilon/2$ be a small positive number to be chosen later.  We
have
\[
\int f u^{\alpha}=\int f_+ u^{\alpha}-f_- u^{\alpha}\geq c^{\alpha} \int f_+-f_-u^{\alpha}.
\]
Let now $p$ be chosen such that $\norm{f_-}_{L^p}<\sqrt{1+\delta}\norm{f_-}_{L^1}$, and let $p'$ be
 conjugate to $p$. Then using H\"{o}lder's inequality
\[
       \int f_- u^{\alpha}\leq \norm{f_-}_{L^p}\norm{u^{\alpha}}_{L^{p'}}
       \leq \sqrt{1+\delta}\norm{f_-}_{L^1}\left(\int u^{\varepsilon\frac{\alpha p'}{\varepsilon}}\right)^{1/p'}.
\]
Assuming $\alpha$ to be chosen small enough that $\alpha p'/\varepsilon<1$, 
another application of H\"{o}lder's inequality yields
\[
    \int f_- u^{\alpha}\leq \norm{f_-}_{L^p}\norm{u^{\alpha}}_{L^{p'}}\leq \sqrt{1+\delta}\norm{f_-}_{L^1} \norm{u}_{L^{\varepsilon}}^{\alpha}.
\]
We now further assume $\alpha$ to be chosen such that
$\norm{u}_{L^{\varepsilon}}^{\alpha}<\sqrt{1+\delta}$, which we may do by
virtue of the bound on $\norm{u}_{L^{\varepsilon}}$. Making finally
the further restriction on $\alpha$ that it be small enough that
$c^{\alpha}>1-\delta$, we have
\[
    \int f u^{\alpha}>(1-\delta)\int f_+- (1+\delta)\int f_-= \int f -\delta \norm{f}_{L^1}> \int f/2.
\]

Returning now to the original measure, this inequality continues to hold.
The final part of the conclusion follows from H\"{o}lder's inequality.
\end{proof}

These two lemmata can be combined to obtain a supremum bound on
solutions  of the parabolic scalar curvature equation in terms of
a bound on $\sup_r\norm{u}_{L^{\varepsilon}(\Sphere^2,h)}$.

\begin{prop}
Let the family of metrics $h$ be defined on $[r_1,r_2]$ such that
$\Lambda^{-1}\leq\tilde H\leq \Lambda$ for a positive constant
$\Lambda$.   Then if $\norm{u}_{L^{\varepsilon}(\Sphere^2)}$ is
uniformly bounded on $[r_1,r_2]$, the supremum of $u$ is also
uniformly bounded.
\end{prop}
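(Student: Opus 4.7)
The plan is to recast the parabolic scalar curvature equation in the subcritical form of Equation~\eqref{eq:subcrit} by moving a small power of $u$ from the cubic term into the coefficient. With $f=\kappa(h)-8\pi\rho$ we write
\[
   \tilde H\D_r u = u^2\Delta u - (fu^\alpha)\,u^{3-\alpha}+Au,
\]
so that with $\tilde f=fu^\alpha$ this is precisely the subcritical equation. Lemma~\ref{lem:subcrit} then furnishes the supremum bound, provided $\tilde f\in L^q(\Sphere^2)$ with $q>2$ and $\int_{\Sphere^2}\tilde f\,\dif A_h$ is uniformly positive on $[r_1,r_2]$. Both conclusions are outputs of the absorption lemma above, whose hypotheses are: (i) a uniform $L^\varepsilon$-bound on $u$, (ii) a uniform positive lower bound $u\geq c>0$, and (iii) a uniform positive lower bound on $\int_{\Sphere^2}f\,\dif A_h$. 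Hypothesis (i) is given; for (iii), the standing negativity of $L=\Delta-f$ in this section tested against $\varphi\equiv 1$ gives $\int f\geq\lambda_0\,\text{area}_h(\Sphere^2)$, which is uniformly positive on the compact interval $[r_1,r_2]$ by smoothness of $h$.

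The main obstacle is (ii). I plan to obtain the pointwise lower bound via a Hamilton-type maximum principle argument on the Lipschitz function $w(r)=\min_{\Sphere^2}u(r,\cdot)$. At any point $p^*$ where the spatial minimum is attained, $\Delta u(r,p^*)\geq 0$, and the equation yields $\tilde H\D_r u(r,p^*)\geq -fu^3+Au$. Using $\tilde H\geq\Lambda^{-1}$ together with pointwise bounds on $f$ and $A$ (which hold on compact intervals by smoothness of $h$ and $\rho$), this reduces to a differential inequality of the form $\D_r w\geq -C(w+w^3)$; a standard comparison then shows $w$ remains bounded away from $0$ on $[r_1,r_2]$ as long as $w(r_1)>0$, which is granted by the positive initial data.

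With (i)--(iii) established, the absorption lemma produces $\alpha\in(0,\varepsilon/2)$ such that $\tilde f=fu^\alpha$ satisfies $\norm{\tilde f}_{L^q(\Sphere^2)}\leq C'$ with $q=\varepsilon/\alpha>2$ and $\int_{\Sphere^2}\tilde f\,\dif A_h\geq\frac{1}{2}\int_{\Sphere^2} f\,\dif A_h>0$, uniformly on $[r_1,r_2]$. Lemma~\ref{lem:subcrit} then delivers the desired uniform supremum bound on $u$, completing the proof.
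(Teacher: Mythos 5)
Your proof is correct and follows essentially the same route as the paper's: absorb a factor of $u^{\alpha}$ into $f$ via the absorption lemma so the equation becomes the subcritical equation with $\tilde f=fu^{\alpha}$, then invoke Lemma~\ref{lem:subcrit}. The additional work you do to verify the absorption lemma's hypotheses --- the pointwise lower bound $u\geq c>0$ by a minimum-principle comparison, and the uniform positivity of $\int_{\Sphere^2}f\,\dif A_h$ from testing the negativity of $L$ against $\varphi\equiv 1$ --- is left implicit in the paper (the lower bound is dismissed as standard in the proof of the global existence theorem), so your version is simply more complete.
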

\begin{proof}
As remarked, we absorb a factor of $u^{\alpha}$   into $f$
for $\alpha$ as in the previous lemma. That is, we define $\tilde
f= f u^{\alpha}$, so that the parabolic scalar curvature equation
becomes
\[
   \tilde H \D_r u=u^2\Delta u - \tilde f u^{3- \alpha }+ Au,
\]
where $\int\tilde f$ is uniformly bounded from below and
$\norm{\tilde f}_{L^q(\Sphere^2)}$ is uniformly bounded with $q$ also as
in the previous lemma. Thus, applying Lemma~\ref{lem:subcrit}, the
proposition follows.
\end{proof}
We are now in a position to prove
%\begin{theorem} \label{thm:ubound}
%Let the family of metrics $h$ be defined on $[r_1,r_2]$ such that
%$\Lambda^{-1}\leq\tilde H\leq \Lambda$ for a positive constant $\Lambda$,
%and suppose furthermore that $L_h=\Delta_h-(\kappa-8\pi\rho)$ is strictly negative.
%Then the parabolic scalar curvature equation has a unique $C^{\infty}$ solution for any positive $C^{\infty}$ initial data $u|_{r_1}$.
%\end{theorem}
\begin{theorem}\label{thm:exist}
%Suppose that the 
Let $h$ be a family of metrics on $\Sphere^2$  satisfying $h\in C^{\infty}([r_1,r_2]\times \Sphere^2)$.  
 Let $L=\Delta_h-(\kappa(h)-8\pi\rho)$ so that the parabolic scalar curvature equation can be written
\[
     \tilde H\D_ru=u^2Lu+Au.
\]
On the interval $[r_1,r_2]$ assume that $\tilde H>h_0>0$.   Then
for any smooth positive initial data a solution exists on
$[r_1,r_2]$ provided the operator $L$ is strictly negative.
\end{theorem}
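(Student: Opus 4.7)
The plan is to argue by continuation. By standard parabolic theory, positive smooth initial data yields a short-time smooth solution, since the equation is uniformly parabolic wherever $u$ is bounded above and below by positive constants. Let $[r_1,r^*)\subset[r_1,r_2]$ be the maximal interval of existence; I assume for contradiction that $r^*<r_2$ and then derive uniform bounds on $[r_1,r^*)$ that allow smoothly extending the solution past $r^*$.

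A pointwise lower bound $u\geq c>0$ is relatively straightforward. At a spatial minimum of $u(r,\cdot)$ one has $\Delta u\geq 0$ and $\nabla u=0$, so the equation gives the differential inequality $\tilde H\D_r u_*\geq -Mu_*^3-Du_*$ for constants $M,D$ depending only on the fixed data. Writing this in terms of $v=u_*^{-2}$ yields a linear differential inequality $v'\leq C_1 v+C_2$, showing that $u_*$ stays bounded below by a positive constant on the bounded interval $[r_1,r_2]$.

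The main obstacle is a uniform $L^\varepsilon$ estimate on $u$, which by the preceding Proposition then upgrades to the needed $L^\infty$ bound. For a small parameter $p>0$ to be chosen, I multiply the equation by $u^{p-1}/\tilde H$ and integrate over $\Sphere^2$ to obtain
\[
   \frac{1}{p}\frac{d}{dr}\int_{\Sphere^2} u^p\, \dif A_h = \int_{\Sphere^2}\frac{u^{p+1}Lu}{\tilde H}\, \dif A_h + \text{l.o.t.},
\]
where the lower-order terms (from $A$ and from the $r$-derivative of the volume form) are bounded by $C\int u^p$. Applying the hypothesized negativity of $L$ to the test function $\varphi=u^{(p+2)/2}$ gives $\tfrac{(p+2)^2}{4}\int u^p|\nabla u|^2+\int(\kappa-8\pi\rho)u^{p+2}\geq\lambda_0\int u^{p+2}$, and integration by parts gives $-\int u^{p+1}Lu=(p+1)\int u^p|\nabla u|^2+\int(\kappa-8\pi\rho)u^{p+2}$. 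Forming the convex combination with weight $\alpha=4(p+1)/(p+2)^2$ yields
\[
  -\int u^{p+1}Lu \geq \frac{4(p+1)}{(p+2)^2}\lambda_0\int u^{p+2} - \frac{p^2}{(p+2)^2}\sup|\kappa-8\pi\rho|\int u^{p+2},
\]
and for $p$ small enough the right-hand side dominates as $c_p\int u^{p+2}$ with $c_p>0$. H\"older's inequality on $\Sphere^2$ then gives $\int u^{p+2}\geq C(\int u^p)^{(p+2)/p}$, so $y(r):=\int u^p$ satisfies the autonomous inequality $y'\leq C_1 y-C_2 y^{(p+2)/p}$; the superlinear dissipation forces $y$ to remain bounded by a universal constant on $[r_1,r^*)$.

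With the $L^\varepsilon$ bound in hand, the preceding Proposition supplies $\sup u\leq C$. The equation is then uniformly parabolic with bounded coefficients, and standard parabolic Schauder theory together with the Krylov--Safonov H\"older estimate furnish uniform $C^{k,\alpha}$ bounds on $[r_1,r^*)\times\Sphere^2$ for every $k$. These bounds permit smooth extension of the solution past $r^*$, contradicting maximality and establishing $r^*=r_2$. The delicate step is the $L^\varepsilon$ estimate above: the ``energy'' $\int u^p$ couples to the higher power $\int u^{p+2}$, and the argument succeeds only because $p$ can be taken small enough that the gradient coefficient in $-\int u^{p+1}Lu$ matches the one in the bilinear form $-\int\varphi L\varphi$ up to an error absorbable by the spectral gap $\lambda_0$.
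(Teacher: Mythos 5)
Your argument is essentially the paper's: the paper likewise multiplies the equation by a small negative power of $u$ (it uses $u^{-\sigma}$ with $\sigma$ close to $1$, so the exponent $1-\sigma$ plays the role of your $p$), substitutes $w=u^{(3-\sigma)/2}$ so that the resulting quadratic form is a perturbation of $-\int w L w\,\dif A_h$ whose negativity follows from the spectral gap of $L$ once $\sigma$ is near $1$, integrates the resulting Gronwall inequality to obtain a uniform $L^{1-\sigma}$ bound, and then invokes the preceding Proposition to upgrade to a supremum bound; the lower bound and the continuation step are dispatched exactly as you do, via the maximum principle and ``standard parabolic theory.'' One bookkeeping point does not close as written: you divide by $\tilde H$ before integrating, but then integrate $-\int u^{p+1}Lu$ by parts \emph{without} the weight $1/\tilde H$. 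With the weight present, integration by parts produces an extra cross term $\int u^{p+1}\nabla(1/\tilde H)\cdot\nabla u\,\dif A_h$, and absorbing it by Cauchy--Schwarz (after absorbing the gradient part into $(p+1)\int u^p|\nabla u|^2/\tilde H$) costs $C\int u^{p+2}\,\dif A_h$ with a constant depending on $\nabla\tilde H$ that is not controlled by the spectral gap $\lambda_0$, so the sign of the coefficient of $\int u^{p+2}$ is no longer guaranteed. The paper avoids this by not dividing: it keeps $\tilde H$ on the left and tracks the weighted quantity $\int_{\Sphere^2}\tilde H u^{1-\sigma}\,\dif A_h$, whose $r$-derivative differs from $\int\tilde H\,\D_r u^{1-\sigma}\,\dif A_h$ only by zeroth-order terms (from $\D_r\tilde H$ and from $\D_r\,\dif A_h=\tilde H\,\dif A_h$) bounded by $C\int\tilde H u^{1-\sigma}\,\dif A_h$; since $\tilde H$ is pinched between positive constants this is equivalent to your $\int u^p\,\dif A_h$, and your estimate then closes. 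Your additional superlinear dissipation term $-C_2\,y^{(p+2)/p}$ is correct but unnecessary on a finite interval---plain Gronwall suffices, as in the paper.
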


\begin{proof}
By standard parabolic theory, it is enough to establish  upper and
lower pointwise bounds on $u$.  Since the parabolic scalar
curvature equation always admits a positive lower bound on any
finite interval, we must only prove a supremum bound, which we now
do.

For convenience, put $B=\kappa(h)-8\pi\rho$.  For $0<\sigma<1$,
multiply the parabolic scalar curvature equation by $u^{-\sigma}$
and integrate to obtain
\[
    \frac{1}{1-\sigma}\int_{\Sphere^2}\! \tilde H\D_{r}u^{1-\sigma}\, \dif A_h
    = -\int_{\Sphere^2} \! \left((2-\sigma) u^{1-\sigma}|\nabla u|^2+Bu^{3-\sigma}\right)\, \dif A_h+\int_{\Sphere^2}
    \! Au^{1-\sigma}\, \dif A_h.
\]
But this is equivalent to
\begin{gather*}
      \frac{1}{1-\sigma}\frac{d}{dr}\int_{\Sphere^2}\tilde Hu^{1-\sigma}\, \dif A_h
     -\frac{1}{1-\sigma}\int_{\Sphere^2}\D_{r}\tilde Hu^{1-\sigma}\, \dif A_h
     -\frac{1}{(1-\sigma)}\int_{\Sphere^2}\tilde H^2u^{1-\sigma}\, \dif A_h \\
     =-\int_{\Sphere^2} \left(4\frac{(2-\sigma)}{(3-\sigma)^2} |\nabla u^{\frac{3-\sigma}{2}}|^2
     +B\left(u^{\frac{3-\sigma}{2}}\right)^2\right)\, \dif A_h+\int_{\Sphere^2} Au^{1-\sigma}\, \dif A_h
\end{gather*}
Defining $w=u^{\frac{3-\sigma}{2}}$, we have
\begin{gather*}
     \frac{1}{1-\sigma}\frac{d}{dr}\int_{\Sphere^2}\tilde Hu^{1-\sigma}\, \dif A_h\leq
-\int_{\Sphere^2} \left(4\frac{(2-\sigma)}{(3-\sigma)^2} |\nabla w|^2+B w^2\right)\, \dif A_h\\
 + C \int_{\Sphere^2}\tilde Hu^{1-\sigma}\, \dif A_h.
\end{gather*}
Now, by choosing $\sigma$ close enough to $1$, the middle term can
be made close enough   to $\int_{\Sphere^2} w L w \, \dif A_h$ that it will be negative.
Assuming $\sigma$ to be so chosen, we have
\[
      \frac{d}{dr} \int_{\Sphere^2}\tilde Hu^{1-\sigma}\, \dif A_h\leq C \int_{\Sphere^2} \tilde Hu^{1-\sigma}\, \dif A_h.
\]
Integrating this inequality yields a bound on  $\norm{u}_{L^{1-\sigma}(\Sphere^2)}$ on any finite interval $[r_1,r_2]$.
Taking $\varepsilon=1-\sigma$ in the previous proposition establishes the result.
\end{proof}

\section{Proof of the Main Theorem}
 Collecting the results of sections 1-4 establishes:
\begin{prop}
Given $(\Sphere^2,h_0,\chih,\rho_0)$ as in the hypothesis of the Main Theorem,
there exists $\varepsilon$, functions $u,\rho$ on $[r_0,r_0+\varepsilon]\times\Sphere^2$ with $\rho>0$,
and a family of metrics $h(r), r\in [r_0,r_0+\varepsilon]$ such that the metric
\[
    g=u^2dr^2+h
\]
satisfies $R(g)=16\pi\rho$ and
induces $h_0,\chih$, respectively,  as metric and second fundamental form on $S_{r_0}=\{r_0\}\times\Sphere^2$.
Since $\chih$ is trace free then $S_{r_0}$ is minimal.
\end{prop}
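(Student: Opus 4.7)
The plan is to chain together the constructions of Sections~\ref{sec:elliptic}--\ref{sec:remain} and then read off the induced geometry on $S_{r_0}$; no new analysis is required, so the proposition is really an assembly statement.

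First I would invoke the existence theorem at the end of Section~\ref{sec:elliptic} to produce $u_0$. Its hypothesis---that $\Delta_{h_0}-(\kappa(h_0)-\frac{1}{2}|\chih_0|^2-8\pi\rho_0)$ be negative---is precisely the hypothesis of the Main Theorem, so this step is immediate. With $u_0$ in hand, I would carry out the constructions of Section~\ref{sec:decomp}: define $\tilde\chi_0=u_0\chih_0$, solve the ODE~\eqref{eq:hdiff} on a short interval $[0,\varepsilon]$ to obtain a smooth family $h(t)$ with $h(0)=h_0$, and extend $\rho_0$ by $\rho(t,p)=\rho_0(p)$. From these data the operators $L$ and $A$, together with their Taylor polynomials, are well-defined on $[0,\varepsilon]$.

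Next I would apply the theorem of Section~\ref{sec:sequence} to produce a polynomial $\tilde u=u_0+u_1t+\cdots+u_nt^n$ and a compensating polynomial $P$ satisfying Equation~\eqref{eq:polynom0}. The one hypothesis needing verification is that $L_0=\Delta_{h_0}-(\kappa(h_0)-8\pi\rho_0)$ be negative, and here the Main Theorem's hypothesis enters a second time: since $\frac{1}{2}|\chih_0|^2\geq 0$,
\[
     \langle L_0\varphi,\varphi\rangle \leq \left\langle\left(\Delta_{h_0}-\left(\kappa(h_0)-\frac{1}{2}|\chih_0|^2-8\pi\rho_0\right)\right)\varphi,\varphi\right\rangle < 0
\]
for every nonzero $\varphi\in H^1(\Sphere^2)$, so negativity is inherited. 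With $\tilde u$ and $P$ fixed, the theorem of Section~\ref{sec:remain} supplies a unique solution $v$ of Equation~\eqref{eq:remain} on $(0,\varepsilon]\times\Sphere^2$ satisfying $\norm{\nabla^k(t\D_t)^l v}_{C^0}\leq Ct$ for all $k,l$. Setting $u=\tilde u+v t^n$ then yields, after shrinking $\varepsilon$ if necessary, a positive solution of the parabolic scalar curvature equation on $[r_0,r_0+\varepsilon]\times\Sphere^2$ that is as smooth at $t=0$ as desired, provided $n$ was chosen large enough at the outset.

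Finally, I would read off the induced geometry of $S_{r_0}$. By construction $h(r_0)=h_0$, so $g=u^2dr^2+h$ restricts to $h_0$ on $S_{r_0}$. The foliation's second fundamental form is $\chi=\tilde\chi/u$, which at $r_0$ equals $\tilde\chi_0/u_0=\chih_0$ by the definition of $\tilde\chi_0$. Taking the $h_0$-trace of~\eqref{eq:hdiff} at $t=0$ and using that $\chih_0$ is trace-free gives $\tilde H|_{r_0}=0$, hence $H|_{r_0}=\tilde H/u|_{r_0}=0$; so $S_{r_0}$ is minimal and its full second fundamental form equals $\chih_0$. The identity $R(g)=16\pi\rho$ is by design equivalent to the parabolic scalar curvature equation, which $u$ satisfies. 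I anticipate no substantive obstacle; only the verification of $L_0$-negativity and a routine bookkeeping check of smoothness at $t=0$ require any attention.
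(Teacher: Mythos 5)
Your proposal is correct and follows exactly the route the paper takes: the paper's own "proof" of this proposition is literally the single line "Collecting the results of sections 1--4 establishes," and you have simply written out that assembly, including the one verification the paper leaves tacit (that negativity of $L_0=\Delta_{h_0}-(\kappa(h_0)-8\pi\rho_0)$ is inherited from the Main Theorem's hypothesis because $\tfrac{1}{2}|\chih_0|^2\geq 0$) and the correct reading-off of $h_0$, $\chih_0$, and $\tilde H|_{r_0}=0$ from the ODE \eqref{eq:hdiff}. Nothing further is needed.
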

That is to say, we have now constructed the data on the collar region $[r_0,r_0+\varepsilon]$, where it should be noted that the operator  $L=\Delta_h-(\kappa(h)-8\pi\rho)$ is negative at $r_0+\varepsilon$.
We now extend $\rho\geq 0,h$ smoothly to an annular region $[r_0+\varepsilon,T]$ such that
\begin{align*}
    \tilde H&>0,\\
    \rho |_T &\equiv 0,\\
     h|_T &= c\gammab,
\end{align*}
where $c$ is a constant and  $\gammab$ is round.  In addition, we must make the extension in such a way that we may
solve the parabolic scalar curvature equation on this region for any smooth positive initial data at $r_0+\varepsilon$.
By the result of the last section, Theorem~\ref{thm:exist},  we need only ensure that the operator $L$ remains negative.
There are, in general,
 many  extensions that preserve the negativity of $L$.
To make such an extension is not difficult, but is somewhat lengthy and technical,
and we shall not do this here. For a particular example, see the upcoming work~\cite{S3}.
At any rate, with an appropriate extension of $\rho,h,$ and taking initial data for $u$ as given by the metric $g$
already constructed at $r_0+\varepsilon$,
we solve the parabolic scalar curvature equation on $[r_0+\varepsilon,T]$ so that the data is constructed on this region.

At $T$ the local mass density is $0$ and the constructed metric has the form
\[
    g=u^2 dr^2+c\gammab.
\]
One may now follow the constructions presented in previous works, e.g.~\cite{bartnik93},
to construct an asymptotically flat extension of $g$ of  local mass density $0$,
which is such that the foliation of positive mean curvature spheres extends from the sphere at $T$ to $\infty$.   This completes the proof of the main theorem. \hspace{1.25in} $\Box$

%We may now use the parabolic scalar curvature equation in the form
%\begin{equation}\label{eq:primus}
 %   \Hb r\frac{\D u}{\D r}=u^2\Delta_{\gamma}u+Au-\left(\kappa-r^28\pi\rho\right)u^3,
%\end{equation}
%where $\kappa|_r$ is the Gauss curvature of $\gamma(r)$ and
%\begin{gather*}
 %  A=r\frac{\D\Hb}{\D r} - \Hb +\frac12 \abs{\chib}_\gamma^2 + \frac12 \Hb^2, \\
  % \chib_{ab}=\gamma_{ab}+\frac 12 r\frac{\D\gamma_{ab}}{\D r}, \\
   %\Hb=\text{tr}_{\gamma}\chib=2+\frac 12 r\frac{\D\gamma_{ab}}{\D r}\gamma^{ab};
%\end{gather*}

%To complete the proof we must now only deform $\rho$ and $v$ to zero before $r\to\infty$,
 %where the deformation of $v$ is ``slow'' enough in $r$ that
%\[
 % \Hb=\text{tr}_{\gamma}\chib=2+\frac 12 r\frac{\D\gamma_{ab}}{\D r}\gamma^{ab} >0.
%\]

%Applying the results of the previous work~\cite{S}, the constructed  metric
%\[
 %  g=u^2 dr^2+r^2\gamma
%\]
%has all of the desired properties.

\bibliographystyle{amsplain}

\end{document}